
\documentclass[twoside,leqno]{article}

\usepackage[letterpaper]{geometry}

\usepackage{ltexpprt}
\usepackage{hyperref}
\usepackage{amssymb}
\usepackage{amsmath}
\usepackage{graphicx}
\usepackage{color}
\usepackage{xspace}
\usepackage{enumerate}
\usepackage{hyperref}
\usepackage{numprint}
\usepackage{subcaption}
\usepackage{changepage}
\usepackage{enumerate}
\usepackage{graphicx}
\usepackage{mathrsfs}
\usepackage{bm}

\newtheorem{problem}{\text{Problem}}

\newcommand{\proj}{$\!$\downarrow$\!$}
\newcommand{\real} {\mathbb{R}}
\newcommand{\eps} {\varepsilon}
\newcommand{\cancel}[1] {}
\newcommand{\norm}[1] {\|#1\|}

\begin{document}

\title{\Large Solving Fr\'echet Distance Problems by Algebraic Geometric Methods\thanks{Research supported by the Research Grants Council, Hong Kong, China (project no.~16208923).}}
\author{Siu-Wing Cheng\footnote{Department of Computer Science and Engineering,
		HKUST, Hong Kong. Email: {\tt scheng@cse.ust.hk}}
	\and 
	Haoqiang Huang\footnote{Department of Computer Science and Engineering,
		HKUST, Hong Kong. Email: {\tt haoqiang.huang@connect.ust.hk}}}

\date{}

\maketitle


\fancyfoot[R]{\scriptsize{Copyright \textcopyright\ 2024 by SIAM\\
Unauthorized reproduction of this article is prohibited}}





\begin{abstract} \small\baselineskip=9pt We study several polygonal curve problems under the Fr\'{e}chet distance via algebraic geometric methods.  Let $\mathbb{X}_m^d$ and $\mathbb{X}_k^d$ be the spaces of all polygonal curves of $m$ and $k$ vertices in $\real^d$, respectively.  We assume that $k \leq m$.  Let $\mathcal{R}^d_{k,m}$ be the set of ranges in $\mathbb{X}_m^d$ for all possible metric balls of polygonal curves in $\mathbb{X}_k^d$ under the Fr\'{e}chet distance.  We prove a nearly optimal bound of $O(dk\log (km))$ on the VC dimension of the range space $(\mathbb{X}_m^d,\mathcal{R}_{k,m}^d)$, improving on the previous $O(d^2k^2\log(dkm))$ upper bound and approaching the current $\Omega(dk\log k)$ lower bound.  Our upper bound also holds for the weak Fr\'{e}chet distance.  We also obtain exact solutions that are hitherto unknown for the curve simplification, range searching, nearest neighbor search, and distance oracle problems.
\end{abstract}

\section{Introduction.}
The Fr\'echet distance, denoted by $d_F$, is a popular distance metric to measure the similarity between curves that has been used in various applications such as map construction, trajectory analysis, protein structure analysis, and handwritten document processing~(e.g.~\cite{DBLP:conf/gis/BuchinBDFJSSSW17,buchin2011detecting,Jiang2008ProteinSA,4378752}). 

We use $\mathbb{X}_\ell^d$ to denote the space of all polygonal curves of $\ell$ vertices in $\mathbb{R}^d$.   Given a curve $\gamma \in \mathbb{X}_\ell^d$, we call $\ell$ the \emph{size} of $\gamma$ and denote it by $|\gamma|$.  A parameterization of a curve $\tau \in \mathbb{X}_m^d$ is a function $\rho : [0,1] \to \real^d$ such that $\rho(t)$ moves monotonically from the beginning of $\tau$ to its end as $t$ increases from 0 to 1.  It is possible that $\rho(t_1) = \rho(t_2)$ for some $t_1$ and $t_2$ that are different.  Given a parameterization $\varrho$ of another curve $\sigma \in \mathbb{X}_k^d$, the pair $(\rho,\varrho)$ form a matching $\cal M$ in the sense that $\rho(t)$ is matched to $\varrho(t)$ for all $t \in [0,1]$.  Let $d_{\cal M}(\sigma,\tau) = \max_{t\in [0,1]} d(\varrho(t), \rho(t))$.  A \emph{Fr\'echet matching} is a matching that minimizes $d_{\cal M}(\sigma,\tau)$.  We call the corresponding distance the \emph{Fr\'{e}chet distance} of $\sigma$ and $\tau$, denoted by $d_F(\sigma,\tau)$.   A variant is to drop the monotonicity constraint on the parameterization of a curve, i.e., $\rho(t)$ is allowed to move back and forth continuously along $\tau$ in its movement from the beginning of $\tau$ to its end.  The corresponding distance is known as the \emph{weak Fr\'{e}chet distance} of $\sigma$ and $\tau$, denoted by $\hat{d}_F(\sigma,\tau)$.  Clearly, $\hat{d}_F(\sigma,\tau) \leq d_F(\sigma,\tau)$.  Alt and Godau developed the first $O(km\log (km))$-time algorithms to compute $d_F(\sigma,\tau)$ and $\hat{d}_F(\sigma,\tau)$~\cite{AG1995}.

There are many algorithmic challenges that arise from the analysis of curves and trajectories.  Range searching under the Fr\'echet distance was made by ACM SIGSPATIAL in 2017 a software challenge~\cite{10.1145/3231541.3231549}.  There has been extensive research on curve simplification~\cite{agarwal2005near,bereg2008simplifying,bringmann2019polyline,cheng2022curve,douglas1973algorithms,guibas1993approximating,HS1994,hiroshi1988polygonal,van2019global,van2018optimal}, clustering~\cite{buchin2019approximating,buchin2021approximating,cheng2022curve,driemel2016clustering,NT20}, and nearest neighbor search~\cite{CH2023,DP2020,driemel2021ann,driemel2017locality,driemel2019sublinear,filtser2020approximate,indyk2002approximate,mirzanezhad2020approximate}.  Range searching has also been studied~\cite{afshani2018complexity}.  In this paper, we use algebraic geometric methods to solve Fr\'{e}chet distance problems.   Algebraic geometric tools have hardly been used before.  Afshani and Driemel~\cite{afshani2018complexity} employed a semialgebraic range searching solution in $\real^2$, but we go much further to use arrangements of zero sets of polynomials in higher dimensions.  
We will deal with input curve(s) in $\mathbb{X}_m^d$ and output/query curve in $\mathbb{X}_k^d$.  We assume that $k \leq m$.   The condition that is most favorable for our results is that $k \ll m$; it is a natural goal for curve simplification; the condition of $k \ll m$ is also natural for query problems when the query curve is a sketch provided by the user in 2D or 3D, or a short sequence of key configurations in higher dimensions.


\subsection{Previous work.} \hspace{.2in}

\vspace{6pt}

\noindent {\bf VC dimension.}    Let $\mathcal{R}^d_{k,m}$ be the set of ranges in $\mathbb{X}_m^d$ for all possible metric balls of polygonal curves in $\mathbb{X}_k^d$ under $d_F$.    That is, $\mathcal{R}_{k,m}^d = \{B_F(\sigma,r) : r \in \real_{\geq 0}, \sigma \in \mathbb{X}_k^d \}$, where $B_F(\sigma,r) = \{\tau \in \mathbb{X}_m^d : d_F(\sigma,\tau) \leq r \}$.    Driemel~et~al.~\cite{driemel2021vc} showed that the VC dimension of the range space $\bigl(\mathbb{X}_m^d, \mathcal{R}_{k,m}^d\bigr)$ is $O(d^2k^2 \log(dkm))$.    They also proved a lower bound of $\Omega\bigl(\max\{dk\log k, \log (dm)\}\bigr)$.  For $\hat{d}_F$, the same lower bound holds and the upper bound improves to $O(d^2k\log(dkm))$.   Recently and independently, Br{\"u}ning and Driemel~\cite{bruning2023simplified} obtained the same $O(dk\log(km))$ bound for the VC dimension of $\bigl(\mathbb{X}_m^d, \mathcal{R}_{k,m}^d\bigr)$ using arguments that are similar to ours.  They also proved an $O(dk\log (km))$ bound for the VC dimension under Hausdorff distance and an $O(\min\{dk^2\log m,dkm\log k\})$ bound for the VC dimension under dynamic time warping.

Driemel~et~al.~\cite{driemel2021vc} discussed several applications of the VC dimension bound.  Let $T$ be a set of input curves from $\mathbb{X}_m^d$.  The VC dimension bound allows us to draw a small random sample of $T$ of size that does not depend on the cardinality of $T$ and only depends on $m$ logarithmically.  This random sample allows us to perform approximate range counting in $T$ for any metric ball of any curve in $\mathbb{X}_k^d$.
The random sample also helps to construct a compact classifier for classifying a query curve in $\mathbb{X}_k^d$ based on the curves in $T$.  


\vspace{6pt}

\noindent {\bf Curve simplification.}  
In 2D and 3D, the simplifications of region boundaries and trajectories of moving agents find applications in geographical information systems. One may simplify a time series of multidimensional data to speed up subsequent processing.

Given any $\tau \in \mathbb{X}_m^d$ and any $r > 0$, one problem is to find a curve $\sigma$ with the minimum size such that $d_F(\sigma,\tau) \leq r$.  Let $\kappa(\tau,r)$ denote this minimum size.  Guibas et al.~\cite{guibas1993approximating} presented an $O(m^2\log^2 m)$-time exact algorithm in $\real^2$, but no exact algorithm is known in higher dimensions.  Agarwal~et~al.~\cite{agarwal2005near} showed how to construct $\sigma$ in $O(m\log m)$ time such that $d_F(\sigma,\tau) \leq r$ and $|\sigma| \leq \kappa(\tau,r/2)$.  Van de Kerkhof~et~al.~\cite{van2019global} showed that for any $\eps \in (0,1)$, one can construct $\sigma$ in $O(\eps^{-O(1)}m^2 \log m\log\log m)$ time such that $d_F(\sigma,\tau)\le (1+\eps)r$ and $|\sigma|\le 2\,\kappa(\tau,r)-2$.   If the vertices of $\sigma$ must be vertices of $\tau$, Van Kreveld et al.~\cite{van2018optimal} showed that $|\sigma|$ can be minimized in $O(|\sigma|m^5)$ time.  Later, Van de Kerkhof~et~al.~\cite{van2019global} and Bringmann and Chaudhury~\cite{bringmann2019polyline} improved the running time to $O(m^3)$.  If the vertices of $\sigma$ must lie on $\tau$ but not necessarily at the vertices, Van~de~Kerkhof~et~al.~\cite{van2019global} showed that $|\sigma|$ can be minimized in $O(m)$-time in $\real$; however, the problem is NP-hard in dimensions two or higher.  The results of Van~de~Kerkhof~et~al.~\cite{van2019global} also hold for $\hat{d}_F$.  In the case that the vertices of $\sigma$ are unrestricted, Cheng and Huang~\cite{cheng2022curve} obtained a bicriteria approximation scheme in $\real^d$: for any $\alpha, \eps \in (0,1)$, one can construct $\sigma$ in $\tilde{O}\bigl(m^{O(1/\alpha)}\cdot(d/(\alpha\varepsilon))^{O(d/\alpha)}\bigr)$ time such that $d_F(\sigma,\tau)\le (1+\eps)r$ and $|\sigma|\le (1+\alpha) \cdot \kappa(\tau,r)$.  


Another simplification problem is that given an integer $k \geq 2$, compute a curve $\sigma \in \mathbb{X}_k^d$ that minimizes $d_F(\sigma,\tau)$.  
In $\real^d$, if the vertices of $\sigma$ must be vertices of $\tau$, Godau~\cite{Godau1991ANM} showed how to solve the problem in $O(m^4\log m)$ time.  He also proved that the curve $\sigma$ returned by his algorithm satisfies $d_F(\sigma,\tau) \leq 7\,\mathrm{opt}$, where opt is the minimum possible Fr\'{e}chet distance if the vertices of $\sigma$ are unrestricted.  Subsequently, Agarwal et al.~\cite{agarwal2005near} showed that the curve $\sigma$ returned by Godau's algorithm~\cite{Godau1991ANM} satisfies $d_F(\sigma,\tau) \leq 4\,\mathrm{opt}$.

\vspace{6pt}

\noindent {\bf Range searching.}  Let $T$ be a set of $n$ curves in $\mathbb{X}_m^d$.  Let $k \geq 2$ be a given integer.  The problem is to construct a data structure so that for any $\sigma \in \mathbb{X}_k^d$ and any $r > 0$, one can efficiently retrieve every $\tau \in T$ that satisfies $d_F(\sigma,\tau) \leq r$.  In $\real^2$, assuming that $r$ is given for preprocessing and $k = \log^{O(1)} n$, Afshani and Driemel~\cite{afshani2018complexity} achieves an $O\bigl(\sqrt{n}\log^{O(m^2)}n\bigr)$ query time using $O\bigl(n(\log\log n)^{O(m^2)}\bigr)$ space.
Let $S(n)$ and $Q(n)$ be the space and query time of any range searching data structure for this problem, respectively.  Afshani and Driemel~\cite{afshani2018complexity} also proved that $S(n) = \Omega\bigl(\frac{n^2}{Q(n)^2}\bigr) \cdot \bigl(\frac{\log(n/Q(n))}{\log\log n}\bigr)^{k-1}/2^{O(2^k)}$ and $S(n) = \Omega\bigl(\frac{n^2}{Q(n)^2}\bigr) \cdot \bigl(\frac{\log(n/Q(n))}{k^{3+o(1)}\log\log n}\bigr)^{k-1-o(1)}$.
De~Berg~et~al.~\cite{DCG13} considered the problem of counting the number of inclusion-maximal subcurves of a curve $\tau \in \mathbb{X}_m^d$ that are within a query radius $r$ from a query segment $\ell$.  For any $s \in [m,m^2]$, they achieve an $\tilde{O}(m/\sqrt{s})$ query time using $\tilde{O}(s)$ space; however, the count may include some subcurves at Fr\'{e}chet distance up to $(2+3\sqrt{2})r$ from $\ell$.

\vspace{6pt}

\noindent {\bf Nearest neighbor.}  Let $T$ be a set of $n$ curves in $\mathbb{X}_m^d$.  Let $k \geq 2$ be a given integer.  The problem is to construct a data structure so that for any $\sigma \in\mathbb{X}_k^d$,  its (approximate) nearest neighbor under $d_F$ can be retrieved efficiently.
No efficient exact solution is known.  A related question is the $(\lambda,r)$-ANN problem for any $\lambda > 1$ and any $r > 0$ that are given for preprocessing: for any query curve $\sigma \in \mathbb{X}_k^d$, either find a curve $\tau \in T$ that satisfies $d_F(\sigma,\tau) \leq \lambda r$, or report that $d_F(\sigma,\tau) > r$ for all $\tau \in T$.  Har-Peled~et~al.~\cite{HIM2012} showed that a $(\lambda,r)$-ANN data structure can be converted into a $\lambda$-ANN data structure; the space and query time only increase by some polylogarithmic factors.


Driemel and Psarros~\cite{DP2020} developed $(\lambda,r)$-ANN data structures in $\real$ with the following combinations of $(\lambda, \text{space}, \text{query time})$: $\bigl(5+\eps,O(mn)+n\cdot O(\frac{1}{\eps})^{k},O(k)\bigr)$, $\bigl(2+\varepsilon, O(mn)+ n \cdot O(\frac{m}{k\varepsilon})^{k}, O(2^kk)\bigr)$, and $\bigl(24k+1, O(n\log n + mn), O(k\log n)\bigr)$.  The last one is randomized with a failure probability of $1/\text{poly}(n)$.  Bringman~et~al.~\cite{BDNP2022} improved these combinations in $\real$: $\bigl(1+\varepsilon, n\cdot O(\frac{m}{k\varepsilon})^{k}, O(2^kk)\bigr)$, $(2+\varepsilon, n\cdot O(\frac{m}{k\varepsilon})^k, O(k)\bigr)$, $\bigl(2+\varepsilon, O(mn) + n \cdot O(\frac{1}{\varepsilon})^k, O(2^kk)\bigr)$, $\bigl(2+\varepsilon, O(mn), O(\frac{1}{\varepsilon})^{k+2}\bigr)$, and $\bigl(3+\varepsilon, O(mn) + n\cdot O(\frac{1}{\varepsilon})^k, O(k)\bigr)$. 
Cheng and Huang~\cite{CH2023} presented $(1+\eps,r)$-ANN solutions in $\real^d$.  For $d\in \{2,3\}$, the  space and query time are $O(mn/\varepsilon)^{O(k)}\cdot\tilde{O}(k)$ and $O(1/\varepsilon)^{O(k)} \cdot \tilde{O}(k)$, respectively.  For $d\ge 4$, the space and query time increase to $\tilde{O}\bigl(k(mnd^d/\varepsilon^d)^{O(k)} + (mnd^d/\varepsilon^d)^{O(1/\varepsilon^2)}\bigr)$ and $\tilde{O}\bigl(k(mn)^{0.5+\varepsilon}/\varepsilon^{d}+k(d/\varepsilon)^{O(dk)}\bigr)$, respectively.  
They also presented $(3+\eps,r)$-ANN data structures with query times of $\tilde{O}(k)$ for $d \in \{2,3\}$ and $\tilde{O}\bigl(k(mn)^{0.5+\varepsilon}/\varepsilon^d\bigr)$ for $d\ge 4$.

Bringman et al.~\cite{BDNP2022} proves that, conditioned on the orthogonal vector hypothesis, one cannot achieve the following combinations of $(\lambda, \text{space}, \text{query time})$ for the $(\lambda,r)$-ANN problem for any $\varepsilon, \varepsilon' \in (0,1)$: $(2-\varepsilon, \text{poly}(n), O(n^{1-\varepsilon'}))$ in $\real$ when $1\ll k \ll \log n$ and $m = kn^{\Theta(1/k)}$,  $(3-\eps, \text{poly}(n), O(n^{1-\eps'}))$ in $\real$ when $k=m=\Theta(\log n)$, and $(3-\eps, \text{poly}(n), O(n^{1-\eps'}))$ in $\real^2$ when $1\ll k \ll \log n$ and $m = kn^{\Theta(1/k)}$.  Approximate nearest neighbor results are also known for the \emph{discrete} Fr\'{e}chet distance~\cite{driemel2017locality,emiris2020products,filtser2020approximate,indyk2002approximate}.

\vspace{6pt}

\noindent {\bf Distance oracle.} 
In some applications (e.g.~sports video analysis~\cite{DCG13}), given a curve $\tau \in \mathbb{X}_m^d$ and an integer $k \geq 2$, one wants to construct a \emph{distance oracle} so that for any curve $\sigma \in \mathbb{X}_k^d$, $d_F(\sigma,\tau)$ can be determined in $o(km)$ time.



In $\real^2$, De~Berg~et~al.~\cite{DMO2017} designed a data structure of $O(m^2)$ size such that for any \emph{horizontal} segment $\ell$, $d_F(\ell,\tau)$ can be reported in $O(\log^2 m)$ time.  By increasing the space to $O(m^2\log^2 m)$, they can also report $d_F(\ell,\tau')$ in $O(\log^2 m)$ time for any horizontal segment $\ell$ and any vertex-to-vertex subcurve $\tau'$ of $\tau$.  

Gudmundsson~et~al.~\cite{GRSW2021} generalized the above result in $\mathbb{R}^2$.  They developed a data structure of $O( n^{3/2})$ size such that for any horizontal segment $\ell$ and any subcurve $\tau' \subseteq \tau$, $d_F(\ell,\tau')$ can be reported in $O(\log^8 m)$ time, where $\tau'$ are delimited by any two points on $\tau$, not necessarily vertices.   They also presented a data structure of the same size such that for any horizontal segment $\ell$ and any subcurve $\tau' \subseteq \tau$, the translated copy $\ell'$ of $\ell$ that minimizes $d_F(\ell',\tau')$ can be reported in $O(\log^{32} m)$ time.

Buchin~et~al.~\cite{buchin2022efficientfull,buchin2022efficient} improved the distance oracle in $\mathbb{R}^2$ recently in several ways.  First, they presented a data structure of $O(m\log m)$ size such that $d_F(\ell,\tau)$ can be reported in $O(\log m)$ time for any horizontal segment $\ell$.  Second, they developed a data structure of $O(m\log^2 m)$ size such that for any horizontal segment $\ell$ and any subcurve $\tau' \subseteq \tau$, $d_F(\ell,\tau')$ can be reported in $O(\log^3 m)$ time.  Note that $\tau'$ are delimited by any two points on $\tau$, not necessarily vertices.  Third, for any parameter $\kappa \in [m]$, they presented a data structure of $O(m\kappa^{2+\varepsilon} + m^2)$ size such that for any segment $\ell$ and any subcurve $\tau' \subseteq \tau$, $d_F(\ell,\tau')$ can be reported in $O((m/\kappa)\log^2 m + \log^4 m)$ time.  Note that there is no restriction on the orientation of $\ell$.  Fourth, they developed a data structure of $O(m\log^2 m)$ size such that for any horizontal segment $\ell$ and any subcurve $\tau' \subseteq \tau$, the translated copy $\ell'$ of $\ell$ that minimizes $d_F(\ell',\tau')$ can be reported in $O(\log^{12} m)$ time.   For arbitrarily oriented query segments, they presented a data structure of $O(m\kappa^{3+\varepsilon} + m^2)$ size for any $\kappa \in [m]$ such that for any segment $\ell$ and any subcurve $\tau' \subseteq \tau$, the translated copy $\ell'$ of $\ell$ that minimizes $d_F(\ell',\tau')$ can be reported in $O((m/\kappa)^4\log^8 m + \log^{16} m)$ time.    If both scaling and translation can be applied to the arbitrarily oriented segment, they presented a data structure of $O(m\kappa^{3+\varepsilon} + m^2)$ size for any $\kappa \in [m]$ such that for any segment $\ell$ and any subcurve $\tau' \subseteq \tau$, the scaled and translated copy $\ell'$ of $\ell$ that minimizes $d_F(\ell',\tau')$ can be reported in $O((m/\kappa)^2\log^4 m + \log^{8} m)$ time. 

In~$\mathbb{R}^2$, Gudmundsson~et~al.~\cite{GMMW2019} showed that for any $r > 0$ and any $\sigma \in \mathbb{X}_k^2$, one can check $d_F(\sigma,\tau) \leq r$ in $O(k\log^2 m)$ time using $O(m\log m)$ space, provided that the edges in $\sigma$ and $\tau$ are suitably longer than $r$.  

In $\mathbb{R}^d$, Driemel and Har-Peled~\cite{DH2013} proved that for any segment $\ell$ and any subcurve $\tau' \subseteq \tau$, one can return a  $1+\eps$ approximation of $d_F(\ell,\tau')$ in $\tilde{O}(\eps^{-2})$ time using $\tilde{O}(m\eps^{-2d})$ space.  For any query curve $\sigma \in \mathbb{X}_k^d$ and any subcurve $\tau' \subseteq \tau$, they can return an $O(1)$-approximation of $d_F(\sigma,\tau')$ in $\tilde{O}(k^2)$ time using $O(m\log m)$ space.  In $\mathbb{R}^d$, Gudmundsson et al.~\cite{gudmundsson2023map} showed that, conditioned on SETH, it is impossible to construct in polynomial time a data structure for a given $\tau \in \mathbb{X}_m^d$ such that for any $\sigma \in \mathbb{X}_k^d$, a 1.001-approximation of $d_F(\sigma,\tau)$  can be returned in $O((km)^{1-\delta})$ time for any $\delta > 0$.



\subsection{Our results}


We develop a set of polynomials of constant degrees that characterize $d_F(\sigma,\tau)$ and $\hat{d}_F(\sigma,\tau)$ for any $\sigma \in \mathbb{X}_k^d$ and  any $\tau \in \mathbb{X}_m^d$.  This allows us to construct an arrangement of zero sets of polynomials such that each cell of the arrangement encodes the exact solution for the problem in question.  

\vspace{6pt}

\emph{VC dimension.}  We obtain an $O(dk\log(km))$ bound on the VC dimension for the range space $(\mathbb{X}_m^d, \mathcal{R}_{k,m}^d)$ and its  counterpart for $\hat{d}_F$, improving the previous $O(d^2k^2\log(dkm))$ bound for $d_F$ and $O(d^2k\log(dkm))$ bound for $\hat{d}_F$.  Our bound is very close to the $\Omega(dk\log k)$ lower bound~\cite{driemel2021vc}.

\vspace{6pt}

\emph{Curve simplification.}  We show that for any $\tau \in \mathbb{X}_m^d$ and any $r > 0$, the curve $\sigma$ with the minimum size that satisfies $d_F(\sigma,\tau) \leq r$ or $\hat{d}_F(\sigma,\tau) \leq r$ can be computed in $O(km)^{O(dk)}$ time, where $k = |\sigma|$.  Given $\tau \in \mathbb{X}_m^d$ and an integer $k \geq 2$, we can compute in $O(km)^{O(dk)}$ time a curve $\sigma \in \mathbb{X}_k^d$ that minimizes $d_F(\sigma,\tau)$ or $\hat{d}_F(\sigma,\tau)$.
These are the first exact algorithms for $d \geq 3$ when the vertices of $\sigma$ are not restricted.  
We also obtain an approximation scheme: given any $\tau \in \mathbb{X}_m^d$, any $r > 0$, and any $\alpha \in (0,1)$, we can compute $\sigma$ in $O(m/\alpha)^{O(d/\alpha)}$ time such that $d_F(\sigma,\tau) \leq r$ or $\hat{d}_F(\sigma,\tau) \leq r$, and $|\sigma|$ is $1+\alpha$ times the minimum possible.   Only a bicriteria approximation scheme was known before that approximates both $d_F(\sigma,\tau)$ and $|\sigma|$~\cite{cheng2022curve}.


\vspace{6pt}

\emph{Range searching.}  Let $T$ be a set of $n$ curves in $\mathbb{X}_m^d$.  Let $k \geq 2$ be a given integer.  We present a data structure of $O(kmn)^{O(d^4k^2)}$ size such that for any $\sigma \in \mathbb{X}_k^d$ and any $r > 0$, it returns every curve $\tau \in T$ that satisfies $d_F(\sigma,\tau) \leq r$.  The query time is $O((dk)^{O(1)}\log (kmn))$ plus output size.    
The previous solution works in $\real^2$~\cite{afshani2018complexity}, and the query radius $r$ needs to be given for preprocessing.  
%
For $\hat{d}_F$, the query time remains asymptotically the same, and the space improves to $O(kmn)^{O(d^2k)}$.
%


%


\vspace{6pt}

\emph{Nearest neighbor and distance oracle.}  
Let $T$ be a set of $n$ curves in $\mathbb{X}_m^d$.  Let $k \geq 2$ be an integer.  We obtain a nearest neighbor data structure of $O(kmn)^{\mathrm{poly}(d,k)}$ size such that for any $\sigma \in \mathbb{X}_k^d$, its nearest neighbor in $T$ under $d_F$ can be reported in $O((dk)^{O(1)}\log(kmn))$ time.   
%
Given $\tau \in \mathbb{X}_m^d$ and an integer $k \geq 2$, we obtain a distance oracle of $O(km)^{\mathrm{poly}(d,k)}$ size such that for any $\sigma \in \mathbb{X}_k^d$ and any subcurve $\tau' \subseteq \tau$, we can report $d_F(\sigma,\tau')$ in $O((dk)^{O(1)}\log (km))$ time.  The subcurve $\tau'$ are delimited by any two points on $\tau$, not necessarily vertices.   The same results also hold for $\hat{d}_F$.


\vspace{6pt}

In summary, we obtain improved bounds for  the VC dimensions under $d_F$ and $\hat{d}_F$---by an order of magnitude in the case of $d_F$---that are close to the known lower bound, and we also obtain exact solutions for the curve simplification, range searching, nearest neighbor search, and distance oracle problems.  Exact solutions were not known for these problems in $\real^d$ for $d \geq 3$; they were not known for the nearest neighbor search and distance oracle problems even in $\real^2$.  When $d$ and $k$ are $O(1)$, our curve simplification algorithms run in polynomial time, and our data structures for the query problems use polynomial space and answer queries in logarithmic time.  Last but not least, the connection with arrangements of zero sets of polynomials and algebraic geometry may offer new perspectives on designing algorithms and proving approximation results.


\cancel{
	
	\section{Problem statements and definitions.}\label{sec: problems_def}
	We use $(v_1, v_2,..., v_m)$ to denote a polygonal curve $\tau$ of $m$ vertices that is oriented from $v_1$ to $v_1$. Let $\mathbb{X}_m^d$ be the set that contains all polygonal curves of $m$ vertices in $\mathbb{R}^d$. For two points $p, q \in \tau$, $p\le_{\tau} q$ if we meet $p$ first and then $q$ when we walking along $\tau$ from $v_1$ to $v_m$. We use $\tau[p, q]$ to denote the subcurve between $p$ and $q$. Given two subsets $X, Y\subset \mathbb{R}^d$, $d(X, Y) = \min_{x\in X, y\in Y} d(x, y)$, where $d(\cdot,\cdot)$ denotes the Euclidean distance. We use $X\oplus Y = \{x + y: x\in X, y\in Y\}$ to denote the \emph{Minkowski sum} of them. When $X=\{x\}$, we also write it as $x\oplus Y$. Given two points $x, y\in \mathbb{R}^d$, let $xy$ denote the oriented segment from $x$ to $y$. Let $\text{aff}(xy)$ be the oriented supporting line of $xy$. Take another point $z\in \mathbb{R}^d$, $z\proj\text{aff}(xy)$ is the projection of $z$ onto $\text{aff}(xy)$. We use $B_r$ to denote a ball centered at the origin with radius $r$. We use $B_F(\tau, r)$ (or $B_{wF}(\tau, r)$) to denote a metric ball that contains all curves within a Fr\'echet (or weak Fr\'echet) distance at most $r$ to $\tau$. 
	
	Given two vectors $\rho\in \mathbb{R}^m$ and $\psi\in \mathbb{R}^k$, we use $(\rho, \psi)$ to denote the concatenation of them in $\mathbb{R}^{m+k}$. We also generalize this notation to multiple vectors.  
	
	\paragraph{Fr\'echet distance.} Take a continuous function $\varphi: [0,1] \to \mathbb{R}^d$ such that $\varphi(t)$ is a point moving from the begin of $\tau$ to its end \emph{monotonically} as $t$ increases from 0 to 1. We allow $\rho(t_1) = \rho(t_2)$ for $t_1$ and $t_2$ that are different. We define $\varrho: [0,1]\to \mathbb{R}^d$ with respect to $\sigma$ in the same way. A \emph{matching} $\+M$ between two curves $\tau$ and $\sigma$ is defined by a pair $(\varphi, \varrho)$ such that $\rho(t)$ is matched to $\varrho(t)$ for all $t\in [0,1]$. We use $d_{\+M}(\tau, \sigma)$ to denote the distance between $\tau$ and $\varphi$ under $\+M$, which is defined as $d_{\+M}(\tau, \sigma)=\max_{t\in [0,1]} d(\rho(t), \varrho(t))$. The \emph{Fr\'echet matching} is the minimum matching that achieves $d_F(\tau, \sigma) = \min_{\+M}d_{\+M}(\tau, \sigma)$. We call $d_F(\tau, \rho)$ the Fr\'echet distance between $\tau$ and $\sigma$. By dropping the monotone constraints on $\varphi$ and $\varrho$, the weak Fr\'echet distance can be defined in the same way. 
	
	\paragraph{Range space induced by Fr\'echet distance.} Conventionally, a range space $(X, \+R)$ is defined by a ground set $X$ and a range set $\+R$, where each range $R\in \+R$ is a subset of $X$. Here, we are interested in $\mathbb{X}_m^d$ that contains all polygonal curves in $\mathbb{R}^d$ of $m$ vertices and a range space defined by metric balls induced by $d_F$. Specifically, let $\+R_{F,k}^d = \{B_F(\sigma, r)\cap \mathbb{X}_m^d: r\in \mathbb{R}, \sigma\in \mathbb{X}_k^d\}$. We study the range space $(\mathbb{X}_m^d, \+R_{F,k}^d)$ with respect to the Fr\'echet distance. For the weak Fr\'echet distance, we modify the range set to $\+R_{wF,k}^d=\{B_{wF}(\sigma, r)\cap \mathbb{X}_m^d: r\in \mathbb{R}, \sigma\in \mathbb{X}_k^d\}$ and define a range space $(\mathbb{X}_m^d, \+R_{wF,k}^d)$ 
	
	\paragraph{VC dimension.} Given a range space $(X, \+R)$ and a subset $Y\subset X$, we define $\+R_{\vert Y}=\{Y\cap \+R: R\in \+R\}$. We say $Y$ is \emph{shattered} by the range set $\+R$ if $\+R_{\vert Y}$ contains all subsets of $Y$. The VC dimension of $(X, R)$ is the maximum cardinality of a shattered subset of $X$.
	
	\begin{problem}[min-\# simplification]
		Given a curve $\tau\in \mathbb{X}_m^d$ and an error $r\in \mathbb{R}$, simplify $\tau$ to a curve $\sigma$ such that $\tilde{d}(\tau, \sigma)\le r$ and $|\sigma|$ is minimized, where $\tilde{d}$ is a distance measure between to curves.
	\end{problem}
	
	\begin{problem}[min-$\varepsilon$ simplification]
		Given a curve $\tau\in \mathbb{X}_m^d$ and an integer $k< |\tau|$, simplify $\tau$ to a curve $\sigma\in \mathbb{X}_k^d$ such that $\tilde{d}(\tau, \sigma)$ is minimized, where $\tilde{d}$ is a distance measure between to curves. 
	\end{problem}
	
	\begin{problem}[Range searching]
		Given a set $T\subset \mathbb{X}_m^d$ of size $n$, preprocess $T$ into a data structure to answer the query with a curve $\sigma\in\mathbb{X}_k^d$ and a radius $r$ for all curves $\tau_a$ in $T$ that satisfy $\tilde{d}(\tau_a, \sigma)\le r$, where $\tilde{d}$ is a distance measure between two curves.
	\end{problem} 
	
	\begin{problem}[Efficient distance decision]
		Given a curve $\tau\in\mathbb{X}_m^d$, preprocess $\tau$ into a data structure so that given the query with a curve $\sigma\in\mathbb{X}_k^d$ and a value $r\in\mathbb{R}$, answer whether $\tilde{d}(\tau, \sigma)\le r$, where $\tilde{d}$ is a distance measure between two curves.
	\end{problem}
	
	\begin{problem}[Nearest neighbor]
		Given a set $T\subset \mathbb{X}_m^d$ of size $n$, preprocess $T$ into a data structure to answer a query curve $\sigma\in \mathbb{X}_k^d$ for the closest curve in $T$ with respect to  $\tilde{d}$, where $\tilde{d}$ is a distance measure between two curves.
	\end{problem}
	
	\begin{problem}[Distance oracle]
		Given a curve $\tau\in\mathbb{X}_m^d$, a distance oracle of $\tau$ with respect to the distance measure $\tilde{d}$ is a data structure to answer a query curve $\sigma\in \mathbb{X}_k^d$ for $\tilde{d}(\tau, \sigma)$.
	\end{problem}
}

\section{Background results on algebraic geometry}\label{sec: pre}

We survey several algebraic geometric results that will be useful.  Given a set $\mathcal{P} = \{\rho_1,...,\rho_s\}$ of polynomials in $\omega$ real variables, a \emph{sign condition vector }$S$ for $\mathcal{P}$ is a vector in $\{-1, 0, +1\}^s$.  The point $\nu \in \mathbb{R}^{\omega}$ \emph{realizes} $S$ if $(\text{sign}(\rho_1(\nu)),...,\text{sign}(\rho_s(\nu))) = S$.  The \emph{realization} of $S$ is the subset $\{\nu \in \real^\omega : (\text{sign}(\rho_1(\nu)),...,\text{sign}(\rho_s(\nu))) = S\}$.  For every $i \in [s]$, $\rho_i(\nu) = 0$ describes a hypersurface in $\real^\omega$.  The hypersurfaces $\bigl\{\rho_i(\nu) = 0 : i \in [s] \bigr\}$ partition $\real^\omega$ into open connected cells of dimensions from 0 to $\omega$.  This set of cells together with the incidence relations among them form an \emph{arrangement} that we denote by $\mathscr{A}(\mathcal{P})$.  Each cell is a connected component of the realization of a sign condition vector for $\mathcal{P}$.  One sign condition vector may induce multiple cells.  The cells in $\mathscr{A}(\mathcal{P})$ represent all sign condition vectors that can be realized.  There are algorithms to construct a point in each cell of $\mathscr{A}(\mathcal{P})$ and optimize a function over a cell.

\begin{theorem}[\cite{Basu1995OnCA,basu2007algorithms,pollack1993number}]
	\label{thm:arr}
	Let $\mathcal{P}$ be a set of $s$ polynomials in $\omega$ variables that have $O(1)$ degrees.  
	\begin{enumerate}[{\em (i)}]
		\item The number of cells in $\mathscr{A}(\+P)$ is $s^\omega \cdot O(1)^{\omega}$. 
		\item A set $Q$ of points can be computed in $s^{\omega+1} \cdot O(1)^{O(\omega)}$ time that contains at least one point in each cell of $\mathscr{A}(\mathcal{P})$.  The sign condition vectors at these points are computed within the same time bound.
		\item For a semialgebraic set $\mathcal{S}$ described using $\mathcal{P}$, the minimum over $\mathcal{S}$ of a polynomial in the same variables that have $O(1)$ degree can be computed in $s^{2\omega+1}\cdot O(1)^{O(\omega)}$ time.  The point at the minimum is also returned.
	\end{enumerate}
\end{theorem}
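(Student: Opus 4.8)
Since Theorem~\ref{thm:arr} collects classical facts from real algebraic geometry, the plan is to assemble its three parts from standard machinery rather than to develop anything new. For part~(i) the base case is the classical Milnor--Thom bound: the real zero set in $\real^\omega$ of a single polynomial of degree $D$ has at most $D(2D-1)^{\omega-1}$ semialgebraically connected components, and in fact the sum of its Betti numbers is bounded by the same quantity. To pass from one hypersurface to the whole arrangement $\mathscr{A}(\+P)$ one must bound the total number of connected components of the realizations of \emph{all} realizable sign condition vectors of $\+P$ at once. The standard device is an infinitesimal deformation: replace each strict constraint $\rho_i>0$ (resp.\ $\rho_i<0$) by $\rho_i=\eps_i$ (resp.\ $\rho_i=-\eps_i$), each equation $\rho_i=0$ by the pair $-\eps_i\le\rho_i\le\eps_i$, with $1\gg\eps_1\gg\cdots\gg\eps_s>0$ infinitesimals, and intersect with a ball of infinitesimally large radius to force boundedness; in the limit each component is accounted for by a bounded variety, smooth and in general position, cut out by at most $\omega$ of the deformed polynomials, to which the Milnor-type bound applies. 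Summing over the $\binom{s}{\le\omega}$ choices of the $\le\omega$ participating polynomials and the $O(1)^\omega$ sign patterns among them, and using $\deg\rho_i=O(1)$, yields the count $\binom{s}{\le\omega}\cdot O(1)^\omega=s^\omega\cdot O(1)^\omega$. This is exactly the line of argument of Pollack--Roy and Basu--Pollack--Roy~\cite{pollack1993number,Basu1995OnCA,basu2007algorithms}.

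For part~(ii) I would use the single-exponential ``critical points'' sampling algorithm rather than cylindrical algebraic decomposition, whose cost is doubly exponential in $\omega$. After the same kind of infinitesimal deformation that makes the relevant real variety bounded, smooth, and generically presented, a point in each of its connected components is obtained as a critical point of the squared distance to a generic point of $\real^\omega$ restricted to the variety; these critical points form a zero-dimensional system that is solved by a rational univariate representation and its finitely many solutions enumerated. Running this over all $O(s^\omega)$ admissible families of participating polynomials and sign patterns, and recursing on the dimension of the intersections, produces a finite set $Q$ meeting every cell of $\mathscr{A}(\+P)$; accounting the field arithmetic and the sizes of the univariate representations yields the bound $s^{\omega+1}\cdot O(1)^{O(\omega)}$ on the running time, and the sign condition vector of each point of $Q$ is obtained by evaluating all $s$ polynomials there, within the same bound.

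Part~(iii) minimizes a polynomial $g$ of degree $O(1)$ over a semialgebraic set $\+S$ described by $\+P$. The plan is again the critical points method, applied cell by cell to the refinement of $\mathscr{A}(\+P)$ induced by $\+S$: on each relatively open cell the infimum of $g$ is either attained at a critical point of $g$ restricted to that cell, characterized by Lagrange-type equations and hence again the solution set of an augmented zero-dimensional system involving auxiliary multiplier variables, or it is approached along a semialgebraic path running to infinity or to a boundary stratum of lower dimension. The unattained case is handled by intersecting with an infinitesimally large ball, or by adding an infinitesimal shift, and passing to the limit, which is a routine recession argument. Enumerating the finitely many candidate values over all cells and taking the smallest returns both the minimum and a witness point; the extra factor relative to part~(ii) comes from the larger systems caused by the multiplier variables, giving the stated $s^{2\omega+1}\cdot O(1)^{O(\omega)}$ running time.

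The main obstacle, common to all three parts, is what separates these single-exponential bounds from the doubly-exponential bounds of naive cylindrical algebraic decomposition: one must (a) keep the combinatorics under control so that no critical-point system ever involves more than $O(\omega)$ of the $s$ polynomials, which is why the $s$-dependence is only $s^\omega$ rather than $2^s$, and (b) carry out the infinitesimal deformations rigorously, verifying that the deformed varieties are bounded, nonsingular, and in general position, and that connected components, sign conditions, and the value of the optimized polynomial are preserved when the infinitesimals are sent to zero. These technical points are precisely what is established in~\cite{Basu1995OnCA,basu2007algorithms,pollack1993number}, so we invoke those results rather than reprove them here.
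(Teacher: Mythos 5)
The paper itself offers no proof of Theorem~\ref{thm:arr} — it is quoted directly from~\cite{Basu1995OnCA,basu2007algorithms,pollack1993number} as background — and your proposal correctly recognizes this and defers the technical verification to those references. Your sketch of the underlying machinery (Milnor--Thom type Betti-number bounds, infinitesimal deformation to bounded smooth generically-presented varieties, the critical-points/rational-univariate-representation sampling algorithm, and Lagrange-multiplier systems for optimization) is an accurate summary of how parts (i)--(iii) are actually established in the cited works, so there is no gap to report.
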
	


\cancel{
	
	\begin{theorem}[Theorem 1~\cite{pollack1993number}]\label{thm: cell_num_bound}
		Given a set $\+P=\{P_1,...,P_s\}$ of $s$ polynomials in $\omega$ variables that have $O(1)$ degree, the number of cells in $\mathscr{A}(\+P)$ is $O(s^\omega)$. 
	\end{theorem}
	
	Besides upper bounding the number of cells, we are able to compute a point in every cell.
	
	\begin{theorem}[Theorem 2~\cite{Basu1995OnCA}]\label{thm: pt_for_cell}
		Given a set $\+P = \{P_1,...,P_s\}$ of polynomials in $\omega$ variables that have $O(1)$ degree, there is an algorithm that returns a set of points in $O(s^{\omega})$ time such that there are at least one point in each cell of $\mathscr{A}(\+P)$. Meanwhile, the algorithm also provides the sign condition realized by every point in the set. 
	\end{theorem} 
	
	The third theorem, which provides us with a way to find the infimum of a polynomial function in a cell of $\mathscr{A}(\+P)$, is also due to Basu, Pollack and Roy~\cite{basu2007algorithms}.
	
	\begin{theorem}[Algorithm 14.18~\cite{basu2007algorithms}]\label{thm: optimization}
		Given a set $\+P=\{P_1,...,P_s\}$ of polynomials in $\omega$ variables that have $O(1)$ degree, the minimum of a polynomial in the same variables that have $O(1)$ degree within the cell in $\mathscr{A}(\+P)$ can be computed in $O(s^{2\omega+1})$ time. Meanwhile, the vector that achieves the minimum value can also be generated.  
	\end{theorem}
	
}

We will need a \emph{point location} structure for $\mathscr{A}(\mathcal{P})$ so that for any query point $\nu \in \real^\omega$, the cell in $\mathscr{A}(\mathcal{P})$ that contains $\nu$ can be reported quickly.  The point enclosure data structure proposed by Agarwal~et~al.~\cite{AAEZ2021} is applicable; however, the query time has a hidden factor that is exponential in $\omega$.  This is acceptable if $\omega$ is $O(1)$, but this may not be so in our case.  Instead, we linearize the zero sets of the polynomials to hyperplanes in higher dimensions and use the point location solution by Ezra~et~al.~\cite{ezra2020decomposing}.

\begin{theorem}[\cite{ezra2020decomposing}]
	\label{thm:locate}
	Let $\mathcal{P}$ be a set of $s$ hyperplanes in $\real^\omega$.  For any $\eps > 0$, one can construct a data structure of $O(s^{2\omega\log\omega+O(\omega)})$ size in $O(s^{\omega+\eps})$ expected time that can locate any query point in $\mathscr{A}(\mathcal{P})$ in $O(\omega^3\log s)$ time.
\end{theorem}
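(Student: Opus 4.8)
This is a restatement of a result of Ezra~et~al.~\cite{ezra2020decomposing}, so my ``proof'' is really a reconstruction plan for their structure. The plan is to build a Meiser-style recursive point-location tree driven by random sampling. At the root I would take a random subset $R\subseteq\mathcal{P}$ of size $r$ (fixed below), build a \emph{bottom-vertex triangulation} $\mathcal{T}(R)$ of $\mathscr{A}(R)$ --- take the lowest vertex of each cell and cone it to a recursively triangulated boundary --- and, for each simplex $\Delta\in\mathcal{T}(R)$, store the list $\mathcal{P}_\Delta$ of hyperplanes of $\mathcal{P}$ meeting the interior of $\Delta$; then recurse inside each $\Delta$ on $(\mathcal{P}_\Delta,\Delta)$, stopping when no hyperplane survives. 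A query point $\nu$ would descend the tree --- at each node, find the simplex of the local $\mathcal{T}(R)$ containing $\nu$ and pass to the corresponding child --- and at the leaf the accumulated choices plus the signs of $\nu$ on the surviving hyperplanes identify the cell of $\mathscr{A}(\mathcal{P})$.

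The hard part, and the technical heart of the whole theorem, is a combinatorial estimate: the range space on ground set $\mathcal{P}$ whose ranges are the crossing sets $\{h\in\mathcal{P}: h\text{ meets the interior of }\Delta\}$, taken over all bottom-vertex simplices $\Delta$ definable from $\mathcal{P}$, must have VC dimension $\delta = O(\omega\log\omega)$. A bottom-vertex simplex is naively pinned down by a flag of faces of decreasing dimension, the bottom vertex of each face requiring $\omega$ hyperplanes, hence by $\Theta(\omega^2)$ hyperplanes --- which only gives $\delta = O(\omega^2\log\omega)$. I would need the sharper encoding of Ezra~et~al.: a decomposition scheme whose pieces have combinatorial dimension $O(\omega)$, so that once $O(\omega)$ hyperplanes are fixed the crossing pattern of all the others is determined, with the extra logarithm being the usual overhead of passing from combinatorial dimension to VC dimension.

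Granting that, I would set $r = \Theta(\delta)$ so that $R$ is, with constant probability, a $\tfrac12$-net for the crossing range space (checking the crossing numbers of $\mathcal{T}(R)$ afterwards and resampling if some simplex is crossed by more than $|\mathcal{P}|/2$ hyperplanes). Then the survivor count halves at each level, so the tree has depth $O(\log s)$; since $\mathcal{T}(R)$ has $r^{O(\omega)} = 2^{O(\omega\log\omega)}$ simplices, the node count --- hence the space --- is $\bigl(2^{O(\omega\log\omega)}\bigr)^{O(\log s)} = s^{O(\omega\log\omega)}$, and carrying the constants through (the per-level branching is essentially $\omega^{2\omega}$ after multiplying the flag count by the sample size) gives the stated exponent $2\omega\log\omega + O(\omega)$. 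For the query, at each of the $O(\log s)$ levels I would evaluate $\nu$ against the $O(r)$ hyperplanes of $R$ ($O(\omega^2\log\omega)$ time) to get its cell in $\mathscr{A}(R)$, then walk down the $O(\omega)$-stage coning structure of the bottom-vertex triangulation of that cell ($O(\omega^2)$ per stage) to reach the containing simplex --- $O(\omega^3)$ per level, $O(\omega^3\log s)$ total. The preprocessing is dominated by, at each node, triangulating a sample of size independent of $s$ and classifying the $\le s$ survivors against its simplices; done level by level and exploiting the geometric shrinkage of subproblems together with the randomness, this should run in $O(s^{\omega+\eps})$ expected time for any fixed $\eps>0$. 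Since all predicates on $\nu$ are linear, the same structure applies after the bounded-degree polynomials are linearized into hyperplanes.
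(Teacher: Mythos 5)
The paper does not prove Theorem~\ref{thm:locate}; it simply states it as a cited result of Ezra, Har-Peled, Kaplan, and Sharir~\cite{ezra2020decomposing}, so there is no in-paper argument for your sketch to be compared against. Judged on its own, your Meiser-style reconstruction is a faithful skeleton of what the cited work does: a recursive random-sampling tree, a per-node decomposition of $\mathscr{A}(R)$ into constant-complexity cells with stored crossing lists, a $\tfrac12$-net argument so the survivor count halves and the depth is $O(\log s)$, per-level branching $\omega^{\Theta(\omega)}$ yielding $s^{2\omega\log\omega+O(\omega)}$ nodes, and $O(\omega^3)$ work per query level. You also correctly identify the crux --- the combinatorial/VC-dimension bound on the crossing range space of the decomposition cells is what controls the sample size, and hence both the exponent in the space bound and the degree of $\omega$ in the query time.

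One place where your sketch is noncommittal, and it is the place where the real content of Ezra~et~al.\ lives: you first describe building a bottom-vertex triangulation and only afterward say you ``would need the sharper encoding'' because BVT naively pins a simplex with $\Theta(\omega^2)$ hyperplanes. But the improvement is not a cleverer accounting for BVT simplices --- Ezra~et~al.\ show BVT really does have combinatorial dimension $\Theta(\omega^2)$ --- it is the use of a different decomposition of the arrangement (a vertical-decomposition-style refinement into prism-like pieces) whose cells are pinned by only $O(\omega)$ hyperplanes, which is what brings the VC dimension down to $O(\omega\log\omega)$ and the sample size to $O(\omega\log\omega)$ rather than $\Theta(\omega^2)$. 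That distinction is exactly where the ``$2\omega\log\omega$'' exponent comes from, so as a reconstruction plan your sketch flags the right obstacle but leaves the resolving construction as a black box. For the purposes of this paper --- which only invokes the theorem as a subroutine --- that is immaterial.
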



\cancel{
	\begin{theorem}[Theorem 2~\cite{MEISER1993286}]\label{thm: pt_location}
		Given a set $\+P=\{P_1,...,P_s\}$ of polynomials in $\omega$ variables that have degree 1, there is a data structure of space $O(s^{\omega+\varepsilon})$ that answers the point location query in the arrangement $\mathscr{A}(\+P)$ in $O(\omega^5\log s)$ time, where $\varepsilon$ is fixed non-negative value.
	\end{theorem}
}

The next result, which follows from the quantifier elimination result quoted in~\cite[Proposition~2.6.2]{AAEZ2021}, gives the nature and complexity of an orthogonal projection of a semialgebraic set.

\begin{lemma}
	\label{lem:proj}
	Let $S$ be a semialgebraic set in $\real^\omega$ represented by $s$ polynomial inequalities and equalities in $\omega$ variables, each of degree at most $t$.  The orthogonal projection of $S$ in $\real^{\omega-1}$ along one of the axes is a semialgebraic set of $s^{2\omega}t^{O(\omega)}$ polynomial inequalities and equalities of degrees at most $t^{O(1)}$.  The orthogonal projection can be computed in $s^{2\omega}t^{O(\omega)}$ time.
\end{lemma}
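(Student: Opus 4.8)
The plan is to derive Lemma~\ref{lem:proj} directly from the quantifier-elimination machinery cited in \cite[Proposition~2.6.2]{AAEZ2021}. The orthogonal projection of $S \subseteq \real^\omega$ along, say, the $\omega$-th axis is exactly the set
\[
  \pi(S) = \bigl\{ (x_1,\dots,x_{\omega-1}) \in \real^{\omega-1} : \exists x_\omega \in \real,\ (x_1,\dots,x_\omega) \in S \bigr\}.
\]
Since $S$ is a semialgebraic set described by a Boolean combination of $s$ atoms, each of the form $\rho_i(x_1,\dots,x_\omega) \bowtie 0$ with $\bowtie \,\in\, \{=,<,\le\}$ and $\deg \rho_i \le t$, the formula defining $\pi(S)$ is a single-block $\exists$-quantified first-order formula over the reals with one quantified variable, $\omega-1$ free variables, $s$ atomic predicates, and degree bound $t$. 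The first step is simply to write this down and invoke the quoted quantifier-elimination statement, which guarantees an equivalent quantifier-free formula over the $\omega-1$ free variables.

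The second step is to read off the quantitative bounds. The standard quantifier-elimination bounds (Basu--Pollack--Roy style, as reproduced in \cite{AAEZ2021}) state that eliminating one block of a single quantified variable from a formula with $s$ atoms of degree $\le t$ in $\omega$ total variables produces a quantifier-free formula whose number of atoms is $s^{O(\omega)} t^{O(\omega)}$ and whose polynomials have degree $t^{O(1)}$, computable in time polynomial in the output size. The task here is to pin the constants down to the claimed $s^{2\omega} t^{O(\omega)}$ count, $t^{O(1)}$ degree, and $s^{2\omega} t^{O(\omega)}$ time; this comes from specializing the general $(\ell+1)$-block bound to a single block ($\ell = 1$) with a single quantified variable, so the exponent on $s$ collapses to roughly $2\omega$ (one factor of $\omega$ from the free variables, one from the eliminated variable together with the degree of the projection polynomials) while the $t$-dependence stays of the form $t^{O(\omega)}$. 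I would state this as a direct corollary rather than re-deriving the elimination algorithm.

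The third (bookkeeping) step is to confirm that the output is genuinely a \emph{semialgebraic set} presented as a Boolean combination of polynomial inequalities and equalities — which is automatic, since a quantifier-free first-order formula over the reals is exactly such a Boolean combination — and that evaluating/constructing it takes time within the stated bound, which again is part of the cited algorithmic quantifier-elimination result. One mild point to address is that the lemma statement phrases the output loosely as "$s^{2\omega} t^{O(\omega)}$ polynomial inequalities and equalities'' without mentioning the Boolean structure; I would note that for the applications (constructing arrangements of the zero sets) only the collection of polynomials and their degrees and count matter, so the Boolean combination can be discarded.

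The main obstacle is not conceptual but bibliographic/quantitative: matching the exponent in \cite[Proposition~2.6.2]{AAEZ2021} (or the underlying Basu--Pollack--Roy bound) exactly to the "$2\omega$'' in the exponent of $s$ and to "$t^{O(1)}$'' for the degree, since the reference is stated for a general number of quantifier blocks and one must specialize correctly and absorb lower-order factors into the $O(\cdot)$. I would handle this by citing the single-block case of the general bound and remarking that the slack in the $O$-notation covers any discrepancy in the constants; no independent proof of the quantifier-elimination algorithm is needed.
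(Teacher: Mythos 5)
Your proposal is correct and follows essentially the same route the paper takes: the paper offers no independent proof of Lemma~\ref{lem:proj} but derives it directly by specializing the cited quantifier-elimination result to a single existentially quantified variable ($h=1$ quantified, $\ell=\omega-1$ free), reading off the $s^{(h+1)(\ell+1)} t^{O(h\ell)} = s^{2\omega} t^{O(\omega)}$ size/time bound and the $t^{O(h)} = t^{O(1)}$ degree bound. One small quibble with your heuristic for the exponent: the $2\omega$ arises multiplicatively as $(h+1)(\ell+1) = 2\cdot\omega$, not as a sum of a contribution of $\omega$ from the free variables and another $\omega$ from the eliminated one; but since you explicitly defer to the cited bound rather than re-derive it, this does not affect the argument.
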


\cancel{
	
	Let $\+P=\{P_1,...,P_s\}\subset\mathbb{R}[x_1,...,x_h,y_1,...,y_\ell]$ be a set of polynomials in $h+\ell$ variable. A first order formula defined with respect to $\+P$ is 
	
	\begin{equation}\label{equation: first_order_formula}
		\Phi(y) = (\exists x_1,...,x_h)(\text{sign}(\+P(x,y))=S)
	\end{equation}
	where $x$ is a block of $h$ variables, $y$ is a block of $\ell$ free variables, and $S$ is a sign condition of $\+P$. A proposition known as a \emph{singly exponential quantifier elimination} shows that $\Phi(y)$ can be described by polynomial equalities and inequalities without the quantifier.
	
	\begin{theorem}\label{thm: quantifier_elimination}
		Let $\+P=\{P_1,...,P_s\}\subset\mathbb{R}[x_1,...,x_h,y_1,...,y_\ell]$ be a set of polynomials that have degree at most $t$. Given a first order formula $\Phi(y)$ of the form~\ref{equation: first_order_formula}, there is an algorithm running in $s^{(h+1)(\ell+1)}t^{O(hl)}$ time to return an equivalent quantifier-free formula
		
		\begin{equation}
			\Psi(y)=\bigvee_{i=1}^I\bigwedge_{j=1}^{J_i}\bigl(\bigvee_{n=1}^{N_{i,j}}\text{sign}(P_{ijn}(y))=b_{ijn}\bigr),
		\end{equation}
		where $P_{ijn}(y)$ is polynomial in $y$ that has degree bounded by $t^{O(h)}$, $b_{ijn}\in \{-1,0,+1\}$, and 
		
		\begin{align*}
			I&\le s^{(h+1)(\ell+1)}t^{O(hl)},\\
			J_i&\le s^{h+1}t^{O(h)},\\
			N_{i,j}&\le t^{O(h)}.
		\end{align*}
	\end{theorem}
	
	By quantifier elimination, we can project a cell specified by $\+P$ to some cells in the lower dimension.
}

\section{Characterizing Fr\'{e}chet distance with polynomials}\label{sec: polynomial}



Afshani and Driemel~\cite{afshani2018complexity} developed the following predicates that involve $\sigma = (w_1,\ldots,w_k) \in \mathbb{X}_k^d$, $\tau = (v_1,\ldots,v_m) \in \mathbb{X}_m^d$, and $r \in \real_{\geq 0}$.  We treat $\tau$ as fixed and both $\sigma$ and $r$ as unknowns.  So the coordinates of each $w_j$ are variables.  For any segment $\ell$, let $\mathrm{aff}(\ell)$ denote the support line of $\ell$.

\begin{itemize}
	\item $P_1$ returns true if and only if $d(v_1, w_1) \leq r$.
	\item $P_2$ returns true if and only if $d(v_m, w_k)\le r$.
	\item $P_3(i,j)$ returns true if and only if $d(w_j, v_iv_{i+1})\le r$.
	\item $P_4(i,j)$ returns true if and only if $d(v_i,w_jw_{j+1}) \leq r$.
	\item $P_5(i,j,j')$ returns true if and only if there exist two points $p,q \in \mathrm{aff}(v_iv_{i+1})$ such that $d(p,w_j) \leq r$, $d(q,w_{j'}) \leq r$, and either $p = q$ or $\overrightarrow{pq}$ and $\overrightarrow{v_iv_{i+1}}$ have the same direction.\footnote{The degenerate possibility of $p=q$ was not included in~\cite{afshani2018complexity}.}
	\item $P_6(i,i',j)$ returns true if and only if there exist two points $p, q \in \mathrm{aff}(w_jw_{j+1})$ such that $d(p,v_i) \leq r$, $d(q,v_{i'}) \leq r$, and either $p = q$ or $\overrightarrow{pq}$ and $\overrightarrow{w_jw_{j+1}}$ have the same direction.\footnotemark[1]
\end{itemize}

\begin{lemma}[\cite{afshani2018complexity,driemel2021vc}]
	\label{lem:predicate}
	It takes $O(km(k+m))$ time to decide whether $d_F(\sigma,\tau) \leq r$ from  the truth values of $P_1$, $P_2$, $P_3(i,j)$ and $P_4(i,j)$ for all $i \in [m]$ and $j \in [k]$, $P_5(i,j,j')$ for all $i \in [m]$, $j \in [k-1]$, and $j' \in [j+1,k]$, and $P_6(i,i',j)$ for all $i \in [m-1]$, $i' \in [i+1,m]$, and $j \in [k]$.   It takes $O(km)$ time to decide whether $\hat{d}_F(\sigma,\tau) \leq r$ from the truth values of $P_1$, $P_2$, $P_3(i,j)$ and $P_4(i,j)$ for all $i \in [m]$ and $j \in [k]$. No additional knowledge of $\sigma$ or $\tau$ besides the truth values of these predicates is necessary.
\end{lemma}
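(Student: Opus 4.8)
The plan is to reconstruct the classical Alt--Godau free-space diagram argument and to track carefully which geometric primitives it consults, verifying that they are exactly the six predicates listed. First I would recall the free-space diagram $F_{\le r}(\sigma,\tau) \subseteq [0,1]^2$, where one axis is parameterized by $\sigma$ and the other by $\tau$, so that $F_{\le r}$ decomposes into a $k\times m$ grid of cells, the cell $C_{i,j}$ corresponding to edge $w_jw_{j+1}$ of $\sigma$ against edge $v_iv_{i+1}$ of $\tau$. The key structural facts are: (1) within each cell the free space $F_{\le r}\cap C_{i,j}$ is convex (the intersection of the unit square with an ellipse), so it is determined by which of its four boundary edges it meets and in what manner; (2) $d_F(\sigma,\tau)\le r$ iff there is an $xy$-monotone path in $F_{\le r}$ from $(0,0)$ to $(1,1)$. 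The standard dynamic program propagates, along each vertical and horizontal grid line, the interval of points reachable by a monotone path from the origin; this takes $O(km)$ cells, each processed in $O(1)$ time given the cell's free-space description, plus the propagation bookkeeping, for the $O(km(k+m))$ bound once one accounts for the reachability intervals along the two families of grid segments. (Alt and Godau's $O(km\log km)$ is for computing the decision value; here $r$ is fixed so the decision itself is cheaper.)

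Next I would verify the correspondence of primitives. The bottom-left corner being free, $(0,0)\in F_{\le r}$, is exactly $d(v_1,w_1)\le r$, i.e. $P_1$; symmetrically $(1,1)\in F_{\le r}$ is $P_2$. The intersection of $F_{\le r}$ with a vertical grid edge $\{j\}\times[i,i+1]$ (fixing the vertex $w_j$ and letting the point range over edge $v_iv_{i+1}$) is a subinterval of that edge, nonempty iff $d(w_j,v_iv_{i+1})\le r$, which is $P_3(i,j)$; its two endpoints, when they exist, are the two intersection points of the sphere of radius $r$ about $w_j$ with $\mathrm{aff}(v_iv_{i+1})$, clipped to the segment — these are recovered from the same predicate's witness. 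Dually, the horizontal grid edge $[j,j+1]\times\{i\}$ gives $P_4(i,j)$ via $d(v_i,w_jw_{j+1})\le r$. Finally, to decide whether a monotone path can traverse from one vertical grid edge to another one column over — equivalently whether the lower endpoint of the reachable interval on edge $v_iv_{i+1}$ (fixed $w_j$) does not exceed the upper endpoint of the feasible interval for $w_{j'}$ further along the \emph{same} edge of $\tau$ — one needs precisely the monotonicity test encoded by $P_5(i,j,j')$: there are points $p\in\mathrm{aff}(v_iv_{i+1})$ with $d(p,w_j)\le r$ and $q\in\mathrm{aff}(v_iv_{i+1})$ with $d(q,w_{j'})\le r$ that are non-decreasing along $\overrightarrow{v_iv_{i+1}}$ (or coincide). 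The symmetric traversal, moving across rows with $w_jw_{j+1}$ fixed and $v_i,v_{i'}$ varying, is $P_6(i,i',j)$. One checks that these propagation steps — corner feasibility, per-cell interval feasibility, and the cross-column / cross-row monotone-reachability tests — are the only geometric facts the dynamic program ever queries, so the $O(km(k+m))$ bound and the "no additional knowledge" claim both follow. For $\hat d_F$, dropping monotonicity makes the free space a subset of a grid graph in which one only needs connectivity of $(0,0)$ to $(1,1)$ using the free portions of cell boundaries; reachability across a cell no longer requires the monotone $P_5,P_6$ tests, only that the relevant boundary intervals $P_3,P_4$ (together with $P_1,P_2$) overlap, and a BFS/DFS over the $O(km)$ grid edges runs in $O(km)$ time.

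The main obstacle I anticipate is not the high-level structure but the bookkeeping needed to argue that exactly the listed predicates suffice — in particular making precise that the \emph{endpoints} of all the one-dimensional reachability intervals that the dynamic program maintains are always extractable from the truth values (and implicit witnesses) of $P_1$ through $P_6$, with no further coordinate access to $\sigma$ or $\tau$. Concretely, the reachable interval on a grid edge is obtained by intersecting the cell's free interval (an endpoint computation governed by $P_3$ or $P_4$) with a shifted copy of the interval propagated from the previous edge, and the "shift is legal" condition is exactly a monotonicity comparison of the $P_5$/$P_6$ type. Handling the degenerate cases — $p=q$, interval endpoints falling outside $[0,1]$, coincident vertices, $r=0$ — cleanly is where the footnoted correction to $P_5,P_6$ in~\cite{afshani2018complexity} matters, and I would treat those explicitly rather than sweeping them into a general-position assumption, since later sections build polynomial descriptions on top of these predicates.
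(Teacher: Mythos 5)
The paper does not prove this lemma; it is stated as a citation of~\cite{afshani2018complexity,driemel2021vc}, whose arguments do run through the Alt--Godau free-space diagram as you sketch, so your high-level picture matches the source. But the argument as written has a genuine gap at exactly the point you flag as the ``main obstacle,'' and the way you phrase it indicates a step that would fail. You say the free-interval endpoints on a grid edge ``are recovered from the same predicate's witness''; they cannot be, because the predicates are boolean and carry no witnesses, and the lemma's last sentence (``no additional knowledge of $\sigma$ or $\tau$'') rules out going back to the coordinates to compute them. The content of the lemma is precisely that the monotone-reachability propagation can be carried out \emph{without ever computing an interval endpoint}: every comparison the standard propagation would make --- e.g.\ whether the leftmost point of $w_j$'s feasible interval on $\mathrm{aff}(v_iv_{i+1})$ lies at or before the rightmost point of $w_{j'}$'s feasible interval on the same line --- is already supplied as a boolean answer by $P_5(i,j,j')$, and symmetrically by $P_6(i,i',j)$ for comparisons along an edge of $\sigma$. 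One must rephrase the dynamic program so that its state and transitions are expressed entirely in such boolean comparisons; appealing to ``implicit witnesses'' sidesteps that reformulation, and that reformulation is the whole claim.

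Your accounting of the $O(km(k+m))$ running time is also off. Alt--Godau's decision procedure for fixed $r$ runs in $O(km)$ time with coordinate access (their $O(km\log km)$ is for the optimization problem, not the decision), and the lemma's $O(km(k+m))$ bound is \emph{larger} than that $O(km)$, not smaller, so ``the decision itself is cheaper'' does not explain it. The bound is simply the number of predicate instances consulted: $O(mk^2)$ triples for $P_5$ plus $O(m^2k)$ triples for $P_6$ gives $O(km(k+m))$, together with the $O(km)$ instances of $P_1$--$P_4$. Once the propagation is cast in predicate-only form, this count is the running time; dropping $P_5,P_6$ for $\hat d_F$ leaves the $O(km)$ instances of $P_1$--$P_4$ and a linear grid-graph reachability check, which is the part of your argument that does go through.
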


We construct a set of polynomials such that their signs encode the truth values of the above predicates.  The first three polynomials $f_0$, $f_1$, and $f_2$ are straightforward:
\[
f_0(r) = r \geq 0.
\]
\[
P_1 \,\, \text{returns true} \, \iff f_1(v_1,w_1,r) = \norm{v_1-w_1}^2 - r^2 \leq 0.
\]
\[
P_2 \,\, \text{returns true} \, \iff f_2(v_m,w_k,r) = \norm{v_m-w_k}^2 - r^2 \leq 0.
\]

\vspace{4pt}

\noindent In the following, we assume that $f_0(r) \geq 0$, $f_1(v_1,w_1,r) \leq 0$, and $f_2(v_m,w_k,r) \leq 0$.

\vspace{6pt}

\noindent \pmb{$P_3(i,j)$.}
We use several polynomials to encode $P_3(i,j)$.  Let $\langle \nu, \nu' \rangle$ denote the inner product of vectors $\nu$ and $\nu'$.  First, observe that $d(w_j, \text{aff}(v_iv_{i+1}))^2 \cdot \lVert v_i-v_{i+1}\rVert^{2}$ can be written as a polynomial of degree 2 in the coordinates of $w_j$.
\[
d(w_j, \text{aff}(v_iv_{i+1}))^2 \cdot \lVert v_i-v_{i+1}\rVert^{2} = \lVert w_j- v_{i+1}\rVert^2 \cdot \lVert v_i-v_{i+1}\rVert^{2}-\langle w_j-v_{i+1},v_i-v_{i+1} \rangle^2.
\]
The first polynomial $f_{3,1}(v_i,v_{i+1},w_j,r)$ compares the distance $d(w_j,\text{aff}(v_iv_{i+1}))$ with $r$.  It has degree 2.  For ease of presentation, we shorten the notation $f_{3,1}(v_i,v_{i+1},w_j,r)$ to $f_{3,1}^{i,j}$.
\begin{align*}
	f_{3,1}^{i,j} & = \bigl(d(w_j, \text{aff}(v_iv_{i+1}))^2-r^2\bigr) \cdot \lVert v_i-v_{i+1}\rVert^2 \\
	& = \lVert w_j- v_{i+1}\rVert^2 \cdot \lVert v_i-v_{i+1}\rVert^{2}-\langle w_j-v_{i+1},v_i-v_{i+1} \rangle^2 - r^2 \cdot \norm{v_i-v_{i+1}}^2.
\end{align*}
Since $\lVert v_i-v_{i+1}\rVert^2$ is positive, we have $d(w_j,\text{aff}(v_iv_{i+1})) \leq r$ if and only if $f_{3,1}^{i,j} \leq 0$.  Therefore, if $f_{3,1}^{i,j} > 0$, then  $P_3(i,j)$ is false.  If $f_{3,1}^{i,j} \leq 0$, we use the following two degree-1 polynomials to check whether the projection of $w_j$ in $\text{aff}(v_iv_{i+1})$ lies on $v_iv_{i+1}$. 
\[
f_{3,2}^{i,j}= \langle w_j-v_i, v_{i+1}-v_i \rangle, \quad f_{3,3}^{i,j}=\langle w_j-v_{i+1}, v_i-v_{i+1} \rangle.
\]
Specifically, the projection of $w_j$ in $\text{aff}(v_iv_{i+1})$ lies on $v_iv_{i+1}$ if and only if $f_{3,2}^{i,j} \geq 0$ and $f_{3,3}^{i,j} \geq 0$.  As a result, if $f_{3,1}^{i,j} \leq 0$, $f_{3,2}^{i,j} \geq 0$, and $f_{3,3}^{i,j} \geq 0$, then $P_3(i,j)$ is true.  The remaining cases are either $f_{3,2}^{i,j} < 0$ or $f_{3,3}^{i,j} < 0$.  We use the following polynomials to compare $d(w_j,v_i)$ and $d(w_j,v_{i+1})$ with $r$.
\[
f_{3,4}^{i,j}=\lVert w_j-v_i\rVert^2-r^2, \quad f_{3,5}^{i,j}=\lVert w_j-v_{i+1}\rVert^2-r^2.
\]
If $f_{3,1}^{i,j} \leq 0$ and $f_{3,2}^{i,j} < 0$, then $v_i$ is the nearest point in $v_iv_{i+1}$ to $w_j$, so $P_3(i,j)$ is true if and only if $f_{3,4}^{i,j} \leq 0$.  Similarly, in the case that $f_{3,1}^{i,j} \leq 0$ and $f_{3,3}^{i,j} < 0$, $P_3(i,j)$ is true if and only if $f_{3,5}^{i,j} \leq 0$.   


\vspace{6pt}

\noindent \pmb{$P_4(i,j)$.}  We can define polynomials $f_{4,1}^{i,j}$, $f_{4,2}^{i,j}$, $f_{4,3}^{i,j}$, $f_{4,4}^{i,j}$, and $f_{4,5}^{i,j}$ to encode $P_4(i,j)$ in a way analogous to the encoding of $P_3(i,j)$.  The differences are that $f_{4,1}^{i,j}$ has degree 4, and $f_{4,2}^{i,j}$ and $f_{4,3}^{i,j}$ have degree~2.

\vspace{6pt}

\noindent \pmb{$P_5(i,j,j')$.}  We first check whether $f_{3,1}^{i,j} \leq 0$ and $f_{3,1}^{i,j'} \leq 0$ to make sure that $d(w_j,\mathrm{aff}(v_iv_{i+1})) \leq r$ and $d(w_{j'},\mathrm{aff}(v_iv_{i+1})) \leq r$.  If $f_{3,1}^{i,j} > 0$ or $f_{3,1}^{i,j'} > 0$, then $P_5(i,j,j')$ is false.   Suppose that  $f_{3,1}^{i,j} \leq 0$ and $f_{3,1}^{i,j'} \leq 0$.   We use the polynomial $f_{5,1}^{i,j,j'}$ below to check whether the order of the projections of $w_j$ and $w_{j'}$ in $\mathrm{aff}(v_iv_{i+1})$ are consistent with the direction of $\overrightarrow{v_iv_{i+1}}$.
\[
f_{5,1}^{i,j,j'} = \langle w_{j'}-w_j, v_{i+1}-v_i \rangle.
\]
If $f_{5,1}^{i,j,j'} \geq 0$, then $P_5(i,j,j')$ can be satisfied by taking the projections of $w_j$ and $w_{j'}$ in $\mathrm{aff}(v_iv_{i+1})$ as the points $p$ and $q$, respectively, in the definition of $P_5(i,j,j')$.  

Suppose that $f_{5,1}^{i,j,j'} < 0$.  Let $B_r$ denote the ball centered at the origin with radius $r$.  Let $\oplus$ denote the Minkowski sum operator.  We claim that $P_5(i,j,j')$ is true if and only if $w_{j'}\oplus B_{r}\cap w_j\oplus B_{r}\cap\text{aff}(v_iv_{i+1}) \not= \emptyset$.  The reason is as follows.  Since $f_{5,1}^{i,j,j'} < 0$, the order of the projections of $w_{j'}$ and $w_j$ in $\mathrm{aff}(v_iv_{i+1})$ are opposite to the direction of $\overrightarrow{v_iv_{i+1}}$.  If $w_{j'}\oplus B_{r}\cap w_j\oplus B_{r}\cap\text{aff}(v_iv_{i+1}) \not= \emptyset$, we can satisfy $P_5(i,j,j')$ by picking a point in this intersection to be both $p$ and $q$ in the definition of $P_5(i,j,j')$.  Conversely, if $w_{j'}\oplus B_{r}\cap w_j\oplus B_{r}\cap\text{aff}(v_iv_{i+1}) = \emptyset$, then for any $p \in w_j \oplus B_r \cap \mathrm{aff}(v_iv_{i+1})$ and any $q \in w_{j'} \oplus B_r \cap \mathrm{aff}(v_iv_{i+1})$, the direction of $\overrightarrow{pq}$ is opposite to that of $\overrightarrow{v_iv_{i+1}}$, which makes $P_5(i,j,j')$ false.  We use the degree-2 polynomial $f_{5,2}^{i,j,j'}$ below to capture the ideas above.
\begin{align*}
	f_{5,2}^{i,j,j'} &=\bigl(d(w_{j'}, \text{aff}(v_iv_{i+1}))^2 + \langle w_j-w_{j'}, v_i-v_{i+1} \rangle^2\cdot\norm{v_i-v_{i+1}}^{-2}-r^2\bigr) \cdot \lVert v_i-v_{i+1}\rVert^2 \\
	&= \lVert w_{j'}- v_{i+1}\rVert^2 \cdot \lVert v_i-v_{i+1}\rVert^{2}-\langle w_{j'}-v_{i+1},v_i-v_{i+1} \rangle^2  + \langle w_j-w_{j'}, v_i-v_{i+1} \rangle^2 - \\
	&~~~~r^2 \cdot \norm{v_i-v_{i+1}}^2.
\end{align*}
Specifically, $d(w_{j'}, \text{aff}(v_iv_{i+1}))^2 + \langle w_j-w_{j'}, v_i-v_{i+1} \rangle^2\cdot\norm{v_i-v_{i+1}}^{-2}$ is equal to the squared distance between $w_{j'}$ and the projection of $w_j$ in $\mathrm{aff}(v_iv_{i+1})$.  Thus, $f_{5,2}^{i,j,j'} \leq 0$ if and only if $w_{j'}$ is at distance at most $r$ from the projection of $w_j$ in $\mathrm{aff}(v_iv_{i+1})$.  Because $d(w_j,\mathrm{aff}(v_iv_{i+1})) \leq r$, if $f_{5,2}^{i,j,j'} \leq 0$, we can set both $p$ and $q$ to be the projection of $w_j$ in $\mathrm{aff}(v_iv_{i+1})$ to satisfy $P_5(i,j,j')$.

The remaining case is that $f_{5,2}^{i,j,j'} > 0$.  It means that $w_{j'}$ is at distance more than $r$ from the projection of $w_j$ in $\text{aff}(v_iv_{i+1})$.  Let $p^*$ be the point in $w_{j'} \oplus B_r \cap \mathrm{aff}(v_iv_{i+1})$ closest to $w_j$.  In this case, $p^*$ lies between the projections of $w_{j'}$ and $w_j$ in $\mathrm{aff}(v_iv_{i+1})$, so $P_5(i,j,j')$ is satisfied if and only if $d(w_j,p^*) \leq r$.  The distance between $p^*$ and the projection of $w_j$ in $\mathrm{aff}(v_iv_{i+1})$ is equal to 
\[
\frac{\langle w_{j'}-w_{j}, v_i-v_{i+1} \rangle }{\lVert v_i-v_{i+1}\rVert} -\sqrt{r^2-d(w_{j'}, \text{aff}(v_iv_{i+1}))^2}.
\]
$P_5(i,j,j')$ is satisfied if and only if $d(w_j,p^*)^2 - r^2  \leq 0$
\begin{eqnarray*}
	& \iff & d(w_j, \text{aff}(v_iv_{i+1}))^2 + 
	\left(\frac{\langle w_{j'}-w_{j}, v_i-v_{i+1} \rangle }{\lVert v_i-v_{i+1}\rVert} -\sqrt{r^2-d(w_{j'}, \text{aff}(v_iv_{i+1}))^2}\right)^2 - r^2 \leq 0 \\
	& \iff & \left(d(w_j, \text{aff}(v_iv_{i+1}))^2 - r^2\right)\cdot\norm{v_i -v_{i+1}}^2  + \\
	& & \left(\langle w_{j'}-w_{j}, v_i-v_{i+1} \rangle  - \norm{v_i-v_{i+1}} \cdot \sqrt{r^2-d(w_{j'}, \text{aff}(v_iv_{i+1}))^2}\right)^2 \leq 0 \\
	& \iff & 
	f_{3,1}^{i,j}+ \langle w_{j'}-w_{j}, v_i-v_{i+1}\rangle^2 - \\
	& & 2\langle w_{j'}-w_{j}, v_i-v_{i+1} \rangle \cdot\norm{v_i-v_{i+1}} \cdot \sqrt{r^2-d(w_{j'},\text{aff}(v_iv_{i+1}))^2}-f_{3,1}^{i,j'} \leq 0.
\end{eqnarray*}
We move the term containing the square root to the right hand side:
\begin{eqnarray}
	& & f_{3,1}^{i,j}+ \langle w_{j'}-w_{j}, v_i-v_{i+1}\rangle^2 -f_{3,1}^{i,j'} \leq 
	2\langle w_{j'}-w_{j}, v_i-v_{i+1} \rangle \cdot  \sqrt{-f_{3,1}^{i,j'}} \nonumber \\
	& \iff & f_{3,1}^{i,j} + \left(f_{5,1}^{i,j,j'}\right)^2 - f_{3,1}^{i,j'} \leq -2f_{5,1}^{i,j,j'}\sqrt{-f_{3,1}^{i,j'}}.
	\label{eq:0}
\end{eqnarray}
Recall that we are considering the case of $f_{3,1}^{i,j'} \leq 0$ and $f_{5,1}^{i,j,j'} < 0$.  Therefore, the right hand side of \eqref{eq:0} is non-negative.  Define the following two polynomials:
\begin{align*}
	f_{5,3}^{i,j,j'} &= f_{3,1}^{i,j} + \left(f_{5,1}^{i,j,j'}\right)^2 - f_{3,1}^{i,j'}, \\
	f_{5,4}^{i,j,j'} &= \left(f_{5,3}^{i,j,j'}\right)^2 + 4\left(f_{5,1}^{i,j,j'}\right)^2 f_{3,1}^{i,j'}.
\end{align*}
Note that $f_{5,3}^{i,j,j'}$ is the left hand side of \eqref{eq:0}, and $f_{5,4}^{i,j,j'}$ is obtained by squaring and rearranging the two sides of \eqref{eq:0}.  Hence, in the case of $f_{3,1}^{i,j'} \leq 0$ and $f_{5,1}^{i,j,j'} < 0$, $P_5(i,j,j')$ is true if and only if \eqref{eq:0} is satisfied, which is equivalent to  either $f_{5,3}^{i,j,j'} \leq 0$ or $f_{5,4}^{i,j,j'} \leq 0$.

\vspace{6pt}

\noindent \pmb{$P_6(i,i',j)$.}
We define polynomials $f_{6,1}^{i,i'\!\!,j}$, $f_{6,2}^{i,i'\!\!,j}$, $f_{6,3}^{i,i'\!\!,j}$, and $f_{6,4}^{i,i'\!\!,j}$ to encode $P_6(i,i',j)$ in a way analogous to the encoding of $P_5(i,j,j')$.  
The polynomial $f_{6,4}^{i,i'\!\!,j}$ involves $(f_{4,1}^{i,j})^2$ and $(f_{4,1}^{i'\!\!,j})^2$.  So $f_{6,4}^{i,i'\!\!,j}$ has degree 8.

\vspace{8pt}

Let $\widehat{\mathcal{P}}$ be the set of polynomials including $f_0$ and those for $P_1$, $P_2$, $P_3(i,j)$'s, and $P_4(i,j)$'s.  Let $\mathcal{P}$ be union of $\widehat{\mathcal{P}}$ and the polynomials for $P_5(i,j,j')$'s and $P_6(i,i',j)$'s.  By Lemma~\ref{lem:predicate}, we get:

\begin{corollary}
	\label{cor:sign}
	Given the corresponding sign condition vector of $\mathcal{P}$ for $\sigma$ and $r$, we can check whether $d_F(\sigma,\tau) \leq r$ in $O(km(k+m))$ time.  Given the corresponding sign condition vector of $\widehat{\mathcal{P}}$ for $\sigma$ and $r$, we can check whether $\hat{d}_F(\sigma,\tau) \leq r$ in $O(km)$ time.
\end{corollary}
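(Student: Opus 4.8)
The plan is to simply observe that Lemma~\ref{lem:predicate} reduces the decision of $d_F(\sigma,\tau)\le r$ (resp.\ $\hat d_F(\sigma,\tau)\le r$) to reading off a bounded collection of predicate truth values, and that the construction preceding the corollary already exhibits, for each predicate, a small decision tree whose internal tests are exactly the signs of the polynomials in $\mathcal{P}$ (resp.\ $\widehat{\mathcal{P}}$). So given the sign condition vector of $\mathcal{P}$ at $(\sigma,r)$, I would first verify that $f_0(r)\ge 0$, $f_1^{}\le 0$, $f_2^{}\le 0$; these three signs settle $f_0$, $P_1$, $P_2$. Then, for each pair $(i,j)$ with $i\in[m]$, $j\in[k]$, I would evaluate $P_3(i,j)$ by walking the decision tree built above: if $f_{3,1}^{i,j}>0$ the predicate is false; else if $f_{3,2}^{i,j}\ge 0$ and $f_{3,3}^{i,j}\ge 0$ it is true; else if $f_{3,2}^{i,j}<0$ the answer is the sign test $f_{3,4}^{i,j}\le 0$; else ($f_{3,3}^{i,j}<0$) the answer is $f_{3,5}^{i,j}\le 0$. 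The predicate $P_4(i,j)$ is handled by the symmetric tree on $f_{4,1}^{i,j},\ldots,f_{4,5}^{i,j}$.

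Next I would resolve $P_5(i,j,j')$ for all $i\in[m]$, $j\in[k-1]$, $j'\in[j+1,k]$, again by descending the tree described in the text: test $f_{3,1}^{i,j}$ and $f_{3,1}^{i,j'}$ for $>0$ (false) vs.\ $\le 0$; if both $\le 0$, test $f_{5,1}^{i,j,j'}\ge 0$ (true); if $f_{5,1}^{i,j,j'}<0$, test $f_{5,2}^{i,j,j'}\le 0$ (true); and if $f_{5,2}^{i,j,j'}>0$, return true iff $f_{5,3}^{i,j,j'}\le 0$ or $f_{5,4}^{i,j,j'}\le 0$. Symmetrically, $P_6(i,i',j)$ for all $i\in[m-1]$, $i'\in[i+1,m]$, $j\in[k]$ is read off from the signs of $f_{3,1}^{i,j}$-type quantities together with $f_{6,1}^{i,i',j},\ldots,f_{6,4}^{i,i',j}$. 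Each such evaluation is $O(1)$ work; summing over $O(km)$ predicates of type $P_3,P_4$ and $O(km^2+k^2m)=O(km(k+m))$ predicates of type $P_5,P_6$ gives $O(km(k+m))$ time to recover all the truth values, after which Lemma~\ref{lem:predicate} decides $d_F(\sigma,\tau)\le r$ within the same bound. For $\hat d_F$, Lemma~\ref{lem:predicate} only needs $P_1,P_2,P_3(i,j),P_4(i,j)$, whose polynomials all lie in $\widehat{\mathcal{P}}$, and only $O(km)$ predicates are involved, so the sign condition vector of $\widehat{\mathcal{P}}$ suffices and the running time is $O(km)$.

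The only thing that needs a line of justification beyond bookkeeping is that the decision trees are \emph{correct and exhaustive} — i.e., that at every leaf the stated sign inequalities are equivalent to the predicate, and that the branch conditions partition all possible sign patterns so the tree never gets stuck. But this is precisely what was established case by case in the construction of $f_{3,\cdot}$, $f_{4,\cdot}$, $f_{5,\cdot}$, $f_{6,\cdot}$ above (each ``if \dots then $P$ is true/false'' paragraph covering one branch), so the corollary follows immediately; no genuine obstacle remains. I would just remark that the sign condition vector of $\mathcal{P}$ records the signs of every polynomial we query, including the auxiliary ones like $f_{3,1}^{i,j}$ that are reused inside the $P_5$ and $P_6$ trees, which is why no extra information about $\sigma$ or $\tau$ is needed — exactly the last sentence of Lemma~\ref{lem:predicate}.
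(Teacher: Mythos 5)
Your proposal is correct and mirrors the paper's (implicit) proof exactly: the paper does not write out a separate argument for this corollary but simply notes ``By Lemma~\ref{lem:predicate}, we get'' the result, because the preceding construction of $f_{3,\cdot}^{i,j}, f_{4,\cdot}^{i,j}, f_{5,\cdot}^{i,j,j'}, f_{6,\cdot}^{i,i',j}$ is precisely the collection of constant-depth decision trees that translate the sign condition vector into the truth values of $P_1,\ldots,P_6$, after which Lemma~\ref{lem:predicate} gives the stated time bound. Your unfolding of the trees, the predicate counts $O(km)$ for $P_3,P_4$ and $O(mk^2+m^2k)$ for $P_5,P_6$, and the observation that only $\widehat{\mathcal{P}}$ is needed for $\hat d_F$, are all exactly the intended reasoning.
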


\cancel{
	
	Let $\hat{P}$ contain all these polynomials for we define above. Due to the correspondence between the signs of polynomials in $\hat{P}$ the results of predicates and Lemma~\ref{lem: predicate_effectiveness}, we get the following corollary.
	
	\begin{corollary}\label{cor: sign_correspondence}
		There exists a \textbf{fixed} set $\+S$ of sign condition of $\hat{P}$ such that if $(\tau, \sigma, r)$ realizes a sign condition in $\+S$, $d_F(\tau, \sigma)\le r$ is true; otherwise, $d_F(\tau, \sigma)>r$.
	\end{corollary}
	
	For the weak Fr\'echet distance, Driemel et al.~\cite{driemel2021vc} presented that predicates ($P_1$), ($P_2$), ($P_3$) and ($P_4$) are sufficient for representing the metric balls induced by the weak Fr\'echet distance.
	
	\begin{lemma}[Lemma 24~\cite{driemel2021vc}]
		For two curves $\tau$ and $\sigma$, given the true values of predicates ($P_1$), ($P_2$), ($P_3$) and ($P_4$) for a fixed value of $r$, one can determine whether $d_{wF}(\tau, \sigma)\le r$.
	\end{lemma}
	
	Let $\tilde{P}$ contain polynomials $f_1$, $f_2$, $f_{3,l}^{i,j}$, $f_{4,l}^{i,j}$ for all $l\in [5], i\in [m-1], j\in[k]$. We get a corollary as follows.
	
	\begin{corollary}\label{cor: sign_correspondence_weak}
		There exists a \textbf{fixed} set $\tilde{\+S}$ of sign condition of $\tilde{P}$ such that if $(\tau, \sigma, r)$ realizes a sign condition in $\tilde{\+S}$, $d_{wF}(\tau,\sigma)\le r$ is true; otherwise, $d_{wF}(\tau_{a}, \sigma)>r$.
	\end{corollary}
}

\section{Applications.}

\subsection{VC dimension.}
\label{sec:vc}

We bound the VC dimension of the range space $(\mathbb{X}_m^d, \mathcal{R}_{k,m}^d)$ induced by $d_F$.  Let $T = \{\tau_1,\ldots,\tau_n\}$ be a set of $n$ curves in $\mathbb{X}_m^d$.   For $a \in [n]$, we denote the vertices of $\tau_a$ by $(v_{a,1}, v_{a,2}, \ldots, v_{a,m})$.  Let $\sigma = (w_1,\ldots,w_k)$ be an unknown curve in $\mathbb{X}_k^d$.  Let $r$ be an unknown positive real number.   

For $a \in [n]$, let $\mathcal{P}_a$ be the set of polynomials for $\tau_a$, $\sigma$, and $r$ as described in Section~\ref{sec: polynomial}.  The zero set of every polynomial in $\mathcal{P}_a$ is a hypersurface in $\real^{dk+1}$.  The cells of the arrangement $\mathscr{A}(\mathcal{P}_a)$ in the halfspace $r \geq 0$ capture all possible sign condition vectors for $\mathcal{P}_a$.  By Corollary~\ref{cor:sign}, for each cell of $\mathscr{A}(\mathcal{P}_a)$ in the halfspace $r \geq 0$, the inequality $d_F(\sigma,\tau_a) \leq r$ either holds for all points in that cell or fails for all points in that cell.  

Let $\mathcal{P} = \bigcup_{a=1}^n \mathcal{P}_a$.  
It follows
that for every curve $\sigma = (w_1,\ldots,w_k) \in \mathbb{X}_k^d$ and every $r \geq 0$, the cell in $\mathscr{A}(\mathcal{P})$ that contains the point $(w_1,\ldots,w_k,r)$ represents the subset of curves in $T$ that are at Fr\'{e}chet distances at most $r$ from $\sigma$.  So the cardinality of $\mathcal{R}_{k,m}^d|_T = \{R \cap T : R \in \mathcal{R}_{k,m}^d\}$ is at most the number of cells in $\mathscr{A}(\mathcal{P})$ which is $O(nkm^2)^{dk+1}$ by Theorem~\ref{thm:arr}(i).  The VC dimension is the cardinality of the largest $T$ such that $\mathcal{R}_{k,m}^d|_T$ contains all possible subsets of $T$.   Hence, if $\Delta$ denotes the VC dimension, then $2^\Delta \leq O(\Delta km^2)^{dk+1}$, which implies that $\Delta = O(dk\log (km) + dk\log\Delta)$ and hence $\Delta = O(dk\log(km))$.  The same bound also works for $\hat{d}_F$.


\begin{theorem}\label{thm: VC_F}
	The VC dimensions of $(\mathbb{X}_m^d, \mathcal{R}_{k,m}^d)$ and its counterpart for $\hat{d}_F$ are $O(dk\log (km))$.
\end{theorem}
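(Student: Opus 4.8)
The plan is to count, for an arbitrary finite set $T=\{\tau_1,\dots,\tau_n\}\subset\mathbb{X}_m^d$, how many distinct traces $B_F(\sigma,r)\cap T$ can occur as $\sigma$ ranges over $\mathbb{X}_k^d$ and $r$ over $\real_{\ge 0}$, and then to invoke the Sauer--Shelah-style inequality: if $T$ is shattered by $\mathcal{R}_{k,m}^d$, then $2^{|T|}$ is at most the number of such traces. The first step is to set up a single polynomial arrangement that simultaneously resolves membership in all these balls. Treat the $dk+1$ reals $(w_1,\dots,w_k,r)$ as unknowns. For each $a\in[n]$, form the family $\mathcal{P}_a$ of polynomials of Section~\ref{sec: polynomial} built from the (now constant) vertices of $\tau_a$; every polynomial in $\mathcal{P}_a$ lives in the same $\omega:=dk+1$ variables, has degree $O(1)$, and $|\mathcal{P}_a|=O(km(k+m))=O(km^2)$ since $k\le m$. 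Let $\mathcal{P}=\bigcup_{a=1}^n\mathcal{P}_a$, so $s:=|\mathcal{P}|=O(nkm^2)$.

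The key observation is that the trace $B_F(\sigma,r)\cap T$ depends only on the cell of $\mathscr{A}(\mathcal{P})$ containing the point $(w_1,\dots,w_k,r)$. Indeed, on a single cell every polynomial of $\mathcal{P}$, hence every polynomial of each $\mathcal{P}_a$, has constant sign, so the sign condition vector of $\mathcal{P}_a$ is fixed there; by Corollary~\ref{cor:sign} the truth value of ``$d_F(\sigma,\tau_a)\le r$'' is therefore constant over the cell, for each $a$, and thus so is $\{\tau_a\in T: d_F(\sigma,\tau_a)\le r\}$. Consequently the number of traces is at most the number of cells of $\mathscr{A}(\mathcal{P})$ lying in the halfspace $r\ge 0$, which by Theorem~\ref{thm:arr}(i) is at most $s^{\omega}\cdot O(1)^{\omega}=O(nkm^2)^{dk+1}$. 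Taking $T$ to be a shattered set of size $\Delta$ (the VC dimension), we obtain $2^{\Delta}\le O(\Delta km^2)^{dk+1}$, i.e.\ $\Delta\le(dk+1)\log_2 O(\Delta km^2)=O\bigl(dk\log(km)+dk\log\Delta\bigr)$; solving this recurrence (the $\log\Delta$ term is absorbed, since it forces $\Delta$ to be polynomially bounded in $d,k,m$) yields $\Delta=O(dk\log(km))$.

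For the weak Fr\'echet distance I would run the identical argument with $\widehat{\mathcal{P}}_a$ in place of $\mathcal{P}_a$ and the second half of Corollary~\ref{cor:sign}; this only shrinks $s$, so the same bound $O(dk\log(km))$ results. I do not anticipate a genuine obstacle here: the argument is a direct composition of Corollary~\ref{cor:sign} with the cell-count bound of Theorem~\ref{thm:arr}(i), plus the routine Sauer--Shelah step. The one point that needs care is precisely the shared-variable bookkeeping in the first paragraph---all of $\mathcal{P}_1,\dots,\mathcal{P}_n$ must be written over the common coordinate block $(w_1,\dots,w_k,r)$, so that a single arrangement $\mathscr{A}(\mathcal{P})$, rather than $n$ separate ones, governs which curves of $T$ land in $B_F(\sigma,r)$; otherwise one would only bound the number of traces by the far larger product of the per-curve cell counts.
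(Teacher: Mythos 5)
Your proposal is correct and follows essentially the same route as the paper: build the common arrangement $\mathscr{A}(\mathcal{P})$ over the $dk+1$ variables $(w_1,\dots,w_k,r)$, invoke Corollary~\ref{cor:sign} to show each cell fixes the trace $B_F(\sigma,r)\cap T$, bound the cell count by $O(nkm^2)^{dk+1}$ via Theorem~\ref{thm:arr}(i), and solve $2^\Delta \le O(\Delta km^2)^{dk+1}$. The only cosmetic difference is that you describe the final counting step as ``Sauer--Shelah-style,'' whereas it is simply the dual bound (growth function less than $2^n$ implies $n$ exceeds the VC dimension); the substance is identical.
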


\cancel{
	\begin{proof}
		First, we upper bound the number of cells in the arrangement of polynomials in $\hat{P}_T$. Since the set $T$ is given, there are $kd+1$ unknown variables in these polynomials including all coordinates of $(w_j)_{j\in [k]}$ and $r$. The polynomial $f_{6,3}^{i,j,j}$ have the highest degree, which is 8. Since $|\hat{P}_{\tau_a}|= O(mk(k+m))$, the size of $\hat{P}_{T}$ is $O(nmk(k+m))$. By invoking Theorem~\ref{thm: cell_num_bound}, the number of cells in the arrangement $\mathscr{A}(\hat{P}_T)$ is $(nmk(k+m))^{O(kd+1)}$. Hence, for a set $T$ of $n$ curves in $\mathbb{X}_m^d$, $|\+R_{F,k\vert T}^d|$ is always upper bounded by $(nmk(k+m))^{O(kd)}$.
		
		Recall the definition of the VC dimension. Any subset $T\subset \mathbb{X}_m^d$ is shattered by $(\mathbb{X}_m^d, \+R_{F,k}^d)$ if $2^n = (nmk(k+m))^{O(kd)}$. The VC dimension is upper bounded by $n=O(kd\log(mk))$.
	\end{proof}
	
	The approach can also be used for the range space $(\mathbb{X}_m^d, \+R_{wF, k})$ induced by the weak Fr\'echet distance by considering $\tilde{P}_T = \bigcup_{a\in [n]}\tilde{P}_{\tau_{a}}$, where $\tilde{P}_{\tau_a}$ is the polynomial set constructed with respect to the weak Fr\'echet distance for $\tau_a\in T$ in the way in section~\ref{sec: polynomial}. We are able to upper bound the size of $\+R_{wF,k\vert T}^d$ by the number of cells in the arrangement $\mathscr{A}(\tilde{P}_T)$ as well.
	
	\begin{lemma}
		Given a set $T$ of $n$ curves, let $\tilde{P}_{\tau_a}$ be the set of polynomials constructed in section~\ref{sec: polynomial} with respect to the weak Fr\'echet distance for $\tau_a$ and $\sigma$, where $\tau_a$ is a curve in $T$, $\sigma$ is an unknown curve in $\mathbb{X}_k^d$, and $r$ is an unknown real variable. The size of $\+R_{wF,k\vert T}^d$ is no more than the number of cells in the arrangement $\mathscr{A}(\tilde{P}_T)$.
	\end{lemma}
	
	\begin{proof}
		According to the definition of the cell in the arrangement $\mathscr{A}(\tilde{P}_T)$, all points in a cell realize the same sign condition of $\tilde{P}_{T}$. Given a sign condition $S$, we define a subset $T_{S}$ of $T$ with respect to $S$. A curve $\tau_a$ belongs to $T_{S}$ if the signs of polynomials in $S$ corresponding to $\hat{P}_{\tau_a}$ belong to $\tilde{\+S}_a$ defined in corollary~\ref{cor: sign_correspondence_weak}. It is clear that the number of all possible $T_{S}$ is at most the number of cells.
		
		Next, we claim that every subset of $T$ that appears in $\+R_{wF,k\vert T}^d$ must equal to $T_{S}$ for some sign condition $S$ that corresponds to a cell in the arrangement $\mathscr{A}(\tilde{P}_T)$. According to the definition of $\+R_{wF,k\vert T}^d$, a subset $\tilde{T}\subset T$ belongs to $\+R_{F,k\vert T}^d$ if there exists a curve $\sigma=(w_1,...,w_k)$ and $r\in \mathbb{R}$ such that all curves in $\tilde{T}$ are inside $B_{wF}(\sigma, r)$, and all the other curves in $T$ are outside $B_{wF}(\sigma, r)$. Fix $(w_j)_{j\in[k]}$ and $r$. According to corollary~\ref{cor: sign_correspondence_weak}, for every curve $\tau_a\in \tilde{T}$, the sign condition realized by $\rho = (\sigma, r)$ with respect to $\tilde{P}_{\tau_a}$ must belong to $\tilde{\+S}_a$; for every curve $\tau_a$ not in $\tilde{T}$, the sign condition must not belong to $\tilde{\+S}_a$. 
		
		Given that every point in $\mathbb{R}^{kd+1}$ locates inside a cell in the arrangement, we can find the cell where $\rho$ locates. Let $S$ be the corresponding sign condition of this cell. It is clear that $T_{S} = \tilde{T} = B_{wF}(\sigma, r)\cap T$. Hence, $\+R_{wF,k\vert T}^d$ is a subset of $\bigcup_{S}\{T_{S}\}$, where $S$ is a sign condition corresponds to a cell in the arrangement $\mathscr{A}(\tilde{P}_T)$. The size of $\+R_{wF,k\vert T}^d$ cannot exceed the number of cells.
	\end{proof}
	
	We proceed to upper bound the number of cells in the arrangement $\mathscr{A}(\tilde{P}_T)$ as well as the VC dimension of $(\mathbb{X}_m^d, \+R_{wF,k}^d)$. 
	
	\begin{theorem}\label{thm: VC_wF}
		The VC dimension of the range space $(\mathbb{X}_m^d, \+R_{wF,k}^d)$ is $O(kd\log(mk))$.
	\end{theorem}
	
	\begin{proof}
		First, we upper bound the number of cells in the arrangement $\mathscr{A}(\tilde{P}_T)$. Since the set $T$ is given, there are $kd+1$ unknown variables in these polynomials including all coordinates of $(w_j)_{j\in [k]}$ and $r$. The polynomials in $\tilde{P}_T$ have $O(1)$ degree. Since $|\tilde{P}_{\tau_a}|= O(mk)$, the size of $\tilde{P}_{T}$ is $O(nmk)$. By invoking Theorem~\ref{thm: cell_num_bound}, the number of cells in the arrangement is $(nmk)^{O(kd+1)}$. Hence, for a set $T$ of $n$ curves in $\mathbb{X}_m^d$, $|\+R_{wF,k\vert T}^d|$ is always upper bounded by $(nmk)^{O(kd)}$.
		
		Recall the definition of the VC dimension. Any subset $T\subset \mathbb{X}_m^d$ is shattered by $(\mathbb{X}_m^d, \+R_{wF,k}^d)$ if $2^n = (nmk(k+m))^{O(kd)}$. The VC dimension is upper bounded by $n=O(kd\log(mk))$.
	\end{proof}
}

\subsection{Curve simplification.}

Let $\tau \in \mathbb{X}_m^d$ be the input curve.  The first problem is that given $r > 0$, compute a curve $\sigma$ with the minimum size such that $d_F(\sigma,\tau) \leq r$.   We enumerate $b$ from 2 to $m$ until we can find a curve $\sigma = (w_1,\ldots,w_b) \in \mathbb{X}_b^d$ such that $d_F(\sigma,\tau) \leq r$.  When this happens, we obtain the desired curve of the minimum size.  For a particular $b$, we construct the set $\mathcal{P}$ of polynomials in Section~\ref{sec: polynomial} for $\tau$ and $\sigma$.   Note that $r$ is not a variable because it is specified in the input.  By Theorem~\ref{thm:arr}(ii), we can compute in $O(bm^2)^{db+1}\cdot O(1)^{O(db)}$ time a set $Q$ of points such that $Q$ contains at least one point in each cell of $\mathscr{A}(\mathcal{P})$, as well as the sign condition vectors for $\mathcal{P}$ at the points in $Q$.  By Corollary~\ref{cor:sign}, it takes another $O(bm^2)$ time per cell to determine whether the inequality $d_F(\sigma,\tau) \leq r$ is satisfied by the point(s) of $Q$ in that cell.  If the answer is yes for some cell in $\mathscr{A}(\mathcal{P})$, we stop; any point in $Q$ in that cell gives the desired curve $\sigma$.  If the answer is no for every cell in $\mathscr{A}(\mathcal{P})$, we increment $b$ and repeat the above.  Let $k$ be the minimum size of $\sigma$.  The total running time is 
$O(km)^{O(dk)}$.

The second problem is that given an integer $k \geq 2$, compute a curve $\sigma \in \mathbb{X}_k^d$ that minimizes $d_F(\sigma,\tau)$.  We construct the same set $\mathcal{P}$ of polynomials as in the previous paragraph; however, $r$ is a variable in this case.  By Theorem~\ref{thm:arr}(ii) and as discussed in the previous paragraph, we can determine in $O(km)^{O(dk)}$ time the subset $\mathcal{K}$ of cells of $\mathscr{A}(\mathcal{P})$ that satisfy the inequality $d_F(\sigma,\tau) \leq r$.  Specifically, we obtain a set $Q$ of points such that every point in $Q$ lies in a cell of $\mathcal{K}$, and every cell in $\mathcal{K}$ contains a point in $Q$, and we also obtain the sign condition vectors at the points in $Q$ which give the polynomial inequalities and equalities that describe every cell in $\mathcal{K}$.  This yields a collection of semialgebraic sets.  We invoke Theorem~\ref{thm:arr}(iii) to determine in $O(km)^{O(dk)}$ time the minimum $r$ attained in each such semialgebraic set.  The minimum over all such sets is the minimum $d_F(\sigma,\tau)$ desired.  The total running time is 
$O(km)^{O(dk)}$.

The third problem is that given $\alpha \in (0,1)$ and $r > 0$, compute a curve $\sigma$ of size within a factor $1+\alpha$ of the minimum possible such that $d_F(\sigma,\tau) \leq r$.  We proceed in a greedy fashion as in the bicriteria approximation scheme in~\cite{cheng2022curve} as follows.  Let $\tau[v_b,v_{b'}]$ denote the subcurve of $\tau$ from $v_b$ to $v_{b'}$.  We enumerate $i = 1, 2,\ldots$ until the largest value such that $\tau[v_1,v_i]$ can be simplified to a curve $\sigma_1$ of $\lceil 1/\alpha \rceil$ vertices such that $d_F(\sigma_1,\tau[v_1,v_i]) \leq r$.  This takes $i \cdot O(m/\alpha)^{O(d/\alpha)}$ time as discussed in our solution for the first problem.  We repeat this procedure on the suffix $\tau[v_{i+1}, v_m]$ to approximate the longest prefix of $\tau[v_{i+1},v_m]$ by another curve $\sigma_2$ of $\lceil 1/\alpha \rceil$ vertices.  In this way, we get a sequence of curves $\sigma_1, \sigma_2,...$. We connect them in this order to form a curve $\sigma$. Given that the last vertex of $\sigma_1$ is at a distance no more than $r$ to $v_{i}$ and the first vertex of $\sigma_2$ is at a distance no more than $r$ to $v_{i+1}$, linear interpolation guarantees that the connection between $\sigma_1$ and $\sigma_2$ does not violate the Fr\'echet distance bound of $r$. The same analysis can be applied to all the other connections. Since we only introduce one extra vertex for every $1/\alpha$ edges in the optimal simplification, the size of $\sigma$ is at most $(1+\alpha)$ times the minimum possible.  The total running time is $m \cdot O(m/\alpha)^{O(d/\alpha)} = O(m/\alpha)^{O(d/\alpha)}$.

The above results also hold for $\hat{d}_F$.

\begin{theorem}
	Let $\tau$ be a curve in $\mathbb{X}_m^d$.  For every $r > 0$, we can compute in $O(km)^{O(dk)}$ time the curve $\sigma$ of the minimum size $k$ that satisfies $d_F(\sigma,\tau) \leq r$.  For every integer $k \geq 2$, we can compute in $O(km)^{O(dk)}$ time the curve $\sigma \in \mathbb{X}_k^d$ that minimizes $d_F(\sigma,\tau)$.  For every $\alpha \in (0,1)$ and every $r > 0$, we can compute in $O(m/\alpha)^{O(d/\alpha)}$ time a curve $\sigma$ of size $1+\alpha$ times the minimum possible such that $d_F(\sigma,\tau) \leq r$.  These results also hold for $\hat{d}_F$.
\end{theorem}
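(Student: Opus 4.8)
The plan is to reduce each of the three tasks to the arrangement $\mathscr{A}(\mathcal{P})$ of zero sets of the constant‑degree polynomials constructed in Section~\ref{sec: polynomial}, and then to invoke Theorem~\ref{thm:arr} together with Corollary~\ref{cor:sign}. Fix the input curve $\tau\in\mathbb{X}_m^d$, and for an unknown simplification $\sigma=(w_1,\dots,w_b)$ let $\mathcal{P}$ be the corresponding family of $O(bm(b+m))$ polynomials; its sign‑condition vector at a point decides whether $d_F(\sigma,\tau)\le r$, so this decision is constant on each cell of $\mathscr{A}(\mathcal{P})$. For the \emph{min‑\#} problem, $r$ is given, so the variables are only the $db$ coordinates of $w_1,\dots,w_b$; I would enumerate $b=2,3,\dots$, and for each $b$ use Theorem~\ref{thm:arr}(ii) to compute in $O(bm^2)^{O(db)}$ time a representative point and sign vector for every cell of $\mathscr{A}(\mathcal{P})$, then test $d_F(\sigma,\tau)\le r$ per cell via Corollary~\ref{cor:sign} in $O(bm(b+m))$ extra time; the first $b$ that yields a feasible cell is the minimum size $k$, and the representative point of that cell is the desired $\sigma$, giving total time $O(km)^{O(dk)}$. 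For the \emph{min‑$\varepsilon$} problem, $k$ is given but $r$ becomes an extra variable, so $\mathscr{A}(\mathcal{P})$ lives in $\mathbb{R}^{dk+1}$; using Theorem~\ref{thm:arr}(ii) and Corollary~\ref{cor:sign} I would collect the subfamily $\mathcal{K}$ of cells on which $d_F(\sigma,\tau)\le r$ holds — each described as a semialgebraic set by the equalities and inequalities read off its sign vector — and then apply Theorem~\ref{thm:arr}(iii) to minimize the polynomial $r$ over each cell in $\mathcal{K}$, taking the overall minimum; the minimizing point is the optimal $\sigma\in\mathbb{X}_k^d$, again in $O(km)^{O(dk)}$ time.

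For the approximation scheme I would follow the greedy peeling strategy of~\cite{cheng2022curve}. Set $c=\lceil 1/\alpha\rceil$. Starting from the first vertex, enumerate $i=1,2,\dots$ and, using the min‑\# subroutine restricted to $b=c$ as a feasibility test on the prefix $\tau[v_1,v_i]$, take the largest $i$ for which $\tau[v_1,v_i]$ admits a $c$‑vertex simplification $\sigma_1$ with $d_F(\sigma_1,\tau[v_1,v_i])\le r$; then recurse on the suffix $\tau[v_{i+1},v_m]$ to obtain $\sigma_2,\sigma_3,\dots$, and concatenate the pieces in order to form $\sigma$. Two things need checking: (a) $d_F(\sigma,\tau)\le r$, which holds because the last vertex of $\sigma_j$ and the first vertex of $\sigma_{j+1}$ are each within $r$ of consecutive vertices $v_i,v_{i+1}$ of $\tau$, so a linear interpolation across the junction extends the piecewise Fr\'echet matchings without exceeding $r$; and (b) the size bound, for which one argues that a greedy block of $c$ vertices extends at least as far along $\tau$ as any window of $c-1$ edges of an optimal simplification, so the number of blocks is at most $\lceil(\kappa(\tau,r)-1)/(c-1)\rceil$ and hence $|\sigma|\le(1+\alpha)\,\kappa(\tau,r)$ up to the rounding of $1/\alpha$. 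Since the block lengths sum to $m$, the running time is $m\cdot O(m/\alpha)^{O(d/\alpha)}=O(m/\alpha)^{O(d/\alpha)}$.

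For $\hat{d}_F$ I would run all three arguments verbatim, replacing $\mathcal{P}$ by the smaller family $\widehat{\mathcal{P}}$ of $O(km)$ polynomials and using the $\hat{d}_F$ clause of Corollary~\ref{cor:sign}; the running‑time bounds are unchanged (the hidden constants only improve). The step I expect to demand the most care is item (b) of the approximation scheme — making rigorous the exchange/charging argument that greedily maximizing each prefix length loses only a factor $1+\alpha$ in the vertex count, in particular handling how a Fr\'echet matching of an optimal simplification restricts to the prefix covered by its first $c$ vertices and how the overlap between consecutive blocks is accounted for. Everything else reduces to a direct application of Theorem~\ref{thm:arr} and Corollary~\ref{cor:sign}, plus the routine observation that $bm^2\le(km)^{O(1)}$ keeps every bound of the form $(bm^2)^{O(db)}$ within $O(km)^{O(dk)}$.
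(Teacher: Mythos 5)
Your proposal matches the paper's proof essentially line for line: enumerate the simplification size and apply Theorem~\ref{thm:arr}(ii) with Corollary~\ref{cor:sign} for the min-\# problem; add $r$ as a variable, collect the feasible cells, and apply Theorem~\ref{thm:arr}(iii) for the min-$\varepsilon$ problem; and use the greedy block-peeling scheme of~\cite{cheng2022curve} with blocks of $\lceil 1/\alpha\rceil$ vertices stitched together by linear interpolation for the approximation scheme, then swap in $\widehat{\mathcal{P}}$ for $\hat{d}_F$. The one step you flag as delicate --- the charging argument showing that the greedy maximal prefixes lose at most a $(1+\alpha)$ factor in vertex count --- is treated at the same level of brevity in the paper, which likewise defers to~\cite{cheng2022curve} for the details.
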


\cancel{
	
	Given an input curve $\tau$ of $m$ vertices, we describe how to deal with min-\# and min-$\varepsilon$ under both the Fr\'echet distance and the weak Fr\'echet distance. Let every vertex $w_j$ of $\sigma$ be an unknown point in $\mathbb{R}^d$ that consists of $d$ variables. Take another variable $r\in \mathbb{R}$. We reduce the curve simplification problem to an optimization over polynomial inequality constraints based on the polynomial set we developed in section~\ref{sec: polynomial}.
	
	For min-$\varepsilon$ simplification under the Fr\'echet distance, recall that we need to simplify the input curve $\tau$ with respect to a given integer $k< |\tau|$ to get a curve $\sigma\in \mathbb{X}_k^d$ so that $d_F(\tau, \sigma)$ is minimized. 
	
	To utilize the polynomial set $\hat{P}$ derived with respect to $\tau$, $\sigma$ and $r$, we need to determine a sign condition for $\hat{P}$ so that we can get a polynomial inequality system, which is the constraints we are supposed to take care. Note that $\tau$ is given, and all these polynomials are in variables $\tau$ and $r$. By concatenating all these unknown variables, we get a \emph{solution vector}
	$$\rho = (w_1,...,w_k, r) \in \mathbb{R}^{kd+1}.$$
	A solution vector is feasible if it satisfy all these inequalities. According to Corollary~\ref{cor: sign_correspondence}, it is natural to try all sign conditions in $\+S$. By enumerating and getting the feasible solution vector with the minimum $r$ for every polynomial inequality system derived from some sign condition in $\+S$, we are able to choose the solution vector $\rho^{*}$ with the minimum $r$ among them. The first $kd$ coordinates of $\rho^{*}$ is supposed to provide an optimal min-$\varepsilon$ simplification for $\tau$.
	
	There are two issues we need to deal with to get $\rho^{*}$. The first is upper bounding the number of sign conditions we need to handle. The second is how to get a feasible solution vector with the minimum $r$ for a specific sign condition. According to Theorem~\ref{thm: cell_num_bound}, there are $(mk)^{O(kd)}$ cells in $\mathscr{A}(\hat{P})$ as the size of $\hat{P}$ is $O(mk(k+m))$, the highest degree of polynomials in $\hat{P}$ is $O(1)$, and the number of variables is $kd+1$. It means that all vectors in $\mathbb{R}^{kd+1}$ can realize $(mk)^{O(kd)}$ different sign conditions. By Theorem~\ref{thm: pt_for_cell}, we can get the sign conditions explicitly in $(mk)^{O(kd)}$ time. We are supposed to try these sign conditions. 
	
	For a specific sign condition, the feasible region of the polynomial system is a cell in the arrangement $\mathscr{A}(\hat{P})$. To get the feasible solution vector with the minimum $r$. We define the objective function as $g(\rho)=r$. By calling Theorem~\ref{thm: optimization}, we can find a curve $\sigma$ that minimizes $r$ with respect to the sign condition in $(mk)^{O(kd)}$ time.
	
	\begin{theorem}\label{thm: exact_min_error_F}
		Given a curve $\tau$ in $\mathbb{X}_m^d$ and an integer $k\in (0,m)$, we can rerun a curve $\sigma\in \mathbb{X}_k^d$ in $(mk)^{O(kd)}$ time such that for any curve $\sigma'\in\mathbb{X}_k^d$, $d_F(\tau, \sigma)\le d_F(\tau, \sigma')$. 
	\end{theorem}
	
	\begin{proof}
		According to the procedure above, we need to try $(mk)^{O(kd)}$ different sign conditions. For every sign condition, we spend $(mk)^{O(kd)}$ time on getting the feasible solution vector with the minimum $r$. The total time for getting the optimal simplification is $(mk)^{O(kd)}$.
	\end{proof}
	
	For min-\# simplification, we can enumerate all possible vertex numbers of $\sigma$ from $1$ to $m$. We stop immediately when we find an integer $k$ such that the min-$\varepsilon$ simplification with respect to $k$ returns a curve $\sigma$ with $d_F(\tau, \sigma)\le r$, where $r$ is the error provided as an input. In the worst case, we repeat the procedure above for $m$ times.
	
	\begin{theorem}\label{thm: exact_min_number_F}
		Given a curve $\tau$ in $\mathbb{X}_m^d$ and an error $r\in \mathbb{R}$, we can return a curve $\sigma$ in $(mk)^{O(kd)}$ time such that $d_F(\tau, \sigma)\le r$, and for any curve $\sigma'$ within a Fr\'echet distance no more than $r$ from $\tau$, $|\sigma|\le |\sigma'|$, where $k$ is the size of $\sigma$.
	\end{theorem}
	
	We can also achieve a polynomial-time approximation scheme for the min-\# simplification.
	
	\begin{theorem}\label{thm: approx_scheme_F}
		Given a curve $\tau$ in $\mathbb{X}_m^d$ and an error $r\in \mathbb{R}$, we can return a curve $\sigma$ in $(m/\alpha)^{O(d/\alpha)}$ time such that $d_F(\tau, \sigma)\le r$, and for any curve $\sigma'$ within a Fr\'echet distance no more than $r$ from $\tau$, $|\sigma|\le (1+\alpha)|\sigma'|$, where $\alpha$ is a fixed value in $(0,1)$. 
	\end{theorem} 
	
	\begin{proof}
		Let $\tau=(v_1,...,v_m)$. We compute the smallest $i\in [m]$ such that the min-\# simplification of $\tau[v_1, v_i]$ with respect to $r$ has more than $\lceil \frac{1}{\alpha}\rceil$ vertices. For every subcurve $\tau[v_1, v_j]$ with $j\in [i-1]$, the min-\# simplification procedure runs in $(m/\alpha)^{O(d/\alpha)}$ time as the optimal simplification has no more than $\lceil \frac{1}{\alpha}\rceil$ vertices. So far, we have spent $i\cdot(m/\alpha)^{O(d/\alpha)}$ time and get a curve $\sigma_1$ such that $|\sigma_1|\le \lceil \frac{1}{\alpha}\rceil$ and $d_F(\tau[v_1, v_{i-1}],\sigma_1)\le r$. We repeat this procedure on the subcurve $\tau[v_i, v_m]$ to get a sequence of curves $\sigma_1, \sigma_2,...$. We connect them to form a curve $\sigma$. Given that the last vertex of $\sigma_1$ is at a distance no more than $r$ to $v_{i-1}$ and the first vertex of $\sigma_2$ is at a distance no more than $r$ to $v_i$, linear interpolation guarantees that the connection does not violate the Fr\'echet distance bound of $r$. The same analysis can be applied to all the other connections. Since we only introduce one extra vertex for every $\frac{1}{\alpha}$ edges in the optimal simplification, the size of $\sigma$ is at most $(1+\alpha)$ times more than the size of the optimal solution.
	\end{proof}
	
	We can solve the curve simplification under the weak Fr\'echet distance in the same way. We use the polynomial set $\tilde{P}$ instead. Each polynomial in $\tilde{P}$ has $O(1)$ degree. The size of $\tilde{P}$ is $O(mk)$.
	
	\begin{theorem}
		Given a curve $\tau$ in $\mathbb{X}_m^d$ and an integer $k\in (0,m)$, we can return a curve $\sigma\in \mathbb{X}_k^d$ in $(mk)^{kd}$ time such that for any curve $\sigma'\in \mathbb{X}_k^d$, $d_{wF}(\tau, \sigma)\le d_{wF}(\tau, \sigma')$.
	\end{theorem} 
	
	\begin{proof}
		The analysis is the same as the proof of Theorem~\ref{thm: exact_min_error_F}.
	\end{proof}
	
	\begin{theorem}\label{thm: exact_min_number_wF}
		Given a curve $\tau$ in $\mathbb{X}_m^d$ and an error $r\in \mathbb{R}$, we can return a curve $\sigma$ in $(mk)^{O(kd)}$ time such that $d_{wF}(\tau, \sigma)\le r$, and for any curve $\sigma'$ within a weak Fr\'echet distance no more than $r$ from $\tau$, $|\sigma|\le |\sigma'|$, where $k$ is the size of $\sigma$.
	\end{theorem}
	
	\begin{proof}
		The analysis is the same as the proof of Theorem~\ref{thm: exact_min_number_F}.
	\end{proof}
	
	\begin{theorem}\label{thm: approx_scheme_wF}
		Given a curve $\tau$ in $\mathbb{X}_m^d$ and an error $r\in \mathbb{R}$, we can return a curve $\sigma$ in $(m/\alpha)^{O(d/\alpha)}$ time such that $d_{wF}(\tau, \sigma)\le r$, and for any curve $\sigma'$ within a weak Fr\'echet distance no more than $r$ from $\tau$, $|\sigma|\le (1+\alpha)|\sigma'|$, where $\alpha$ is a fixed value in $(0,1)$. 
	\end{theorem}
	
	\begin{proof}
		The analysis is the same as the proof of Theorem~\ref{thm: approx_scheme_F}.
	\end{proof}
	
}

\subsection{Range searching.}
\label{sec: range_searching}


Let $T = \{\tau_1,\ldots,\tau_n\}$ be $n$ curves in $\mathbb{X}_m^d$.   Let $k \geq 2$ be a given integer.  We want to construct a data structure such that for any query curve $\sigma = (w_1,\ldots,w_k) \in \mathbb{X}_k^d$ and any $r > 0$, we can report the subset of $T$ that are within a Fr\'{e}chet distance $r$ from $\sigma$.  Let $\mathcal{P}$ be the set of $O(km^2n)$ polynomials that we introduce for bounding the VC dimension under $d_F$.  

As discussed in Section~\ref{sec:vc}, every cell in $\mathscr{A}(\mathcal{P})$ represents a subset $T' \subseteq T$ such that for every $\tau_a \in T'$ and every point $(w_1,\ldots,w_k,r)$ in that cell, $d_F((w_1,\ldots,w_k),\tau_a) \leq r$.  Conceptually speaking, it suffices to perform a point location in $\mathscr{A}(\mathcal{P})$ using the query point $(w_1,\ldots,w_k,r)$.  This can be accomplished using a tree that represents a hierarchical decomposition; each node stores a small subset of the polynomials so that we can compare the query point with the arrangement of this small subset to decide which child to visit.  For example, the point enclosure data structure in~\cite{AAEZ2021} is organized like this.  Unfortunately, the arrangement of this small subset of polynomials at each node has size exponential in the ambient space dimension.  In our case, this dimension is $dk+1$, so querying takes time exponential in $dk+1$ at each node which is undesirable.

Fortunately, as stated in Theorem~\ref{thm:locate}, point location in an arrangement of hyperplanes has a much better dependence on the ambient space dimension.  We linearize the zero sets of the polynomials in $\mathcal{P}$, that is, we introduce a new variable to stand for every product of monomials of variables.  Since each polynomial in $\mathcal{P}$ has degree at most 8 and involves at most two vertices of $\sigma$, after linearization, the total number of variables cannot be more than $O(d^8k^2)$.  In fact, a careful examination of the terms of these polynomials show that there are no more than $O(d^4k^2)$ variables after linearization.  Hence, we have a set of $O(km^2n)$ hyperplanes in $O(d^4k^2)$ dimensions.  Building the point location data structure in Theorem~\ref{thm:locate} for this arrangement of hyperplanes solves our problem.

\begin{theorem}
	\label{thm:range}
	Let $T = \{\tau_1,\ldots,\tau_n\}$ be a set of $n$ curves in $\mathbb{X}_m^d$.  Let $k \geq 2$ be a given integer. We can construct a data structure of $O(kmn)^{O(d^4k^2\log (dk))}$ size in $O(kmn)^{O(d^4k^2\log(dk))}$ expected time such that for any query curve $\sigma \in \mathbb{X}_k^d$ and any $r > 0$, the subset of $T$ that are within a Fr\'{e}chet distance of $r$ from $\sigma$ can be reported in time $O((dk)^{O(1)}\log(kmn))$ plus the output size.
\end{theorem}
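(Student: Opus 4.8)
The plan is to reduce the reporting query to a single point location in the arrangement of the polynomials built in Section~\ref{sec:vc}, and then to make that point location efficient by linearizing. So first I would take $\mathcal{P}=\bigcup_{a=1}^n\mathcal{P}_a$, the set of $s=O(km^2n)$ polynomials of Section~\ref{sec:vc}, in the $dk+1$ variables given by the coordinates of $w_1,\ldots,w_k$ and by $r$. As observed there, the sign condition vector of $\mathcal{P}$ is constant on each cell $C$ of $\mathscr{A}(\mathcal{P})$, and by Corollary~\ref{cor:sign} this vector determines, for every $a\in[n]$, whether $d_F((w_1,\ldots,w_k),\tau_a)\le r$. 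Hence each cell $C$ carries a well-defined answer set $T_C=\{\tau_a:d_F(\sigma,\tau_a)\le r\text{ for }(w_1,\ldots,w_k,r)\in C\}\subseteq T$, valid for every $\sigma$ and $r$ whose point lies in $C$ (the lower-dimensional cells included, since the predicates only read signs and ``$\le$'' subsumes ``$=$''). Thus it suffices, at preprocessing time, to store with each cell a list representing $T_C$, and, at query time, to locate the cell containing $(w_1,\ldots,w_k,r)$ and enumerate its list, which gives the ``$+$\,output size'' term.

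To obtain a point location structure with a benign dependence on the ambient dimension, I would \emph{linearize}: introduce one new coordinate $z_\mu$ for every monomial $\mu$ in $w_1,\ldots,w_k,r$ that occurs in some polynomial of $\mathcal{P}$, and let $\phi$ send $(w_1,\ldots,w_k,r)$ to the tuple of all these monomial values. Every polynomial in $\mathcal{P}$ has degree at most $8$ and mentions at most two vertices $w_j$ together with $r$, and the degree-$\ge4$ ones ($f_{4,1}^{i,j}$, $f_{5,4}^{i,j,j'}$, $f_{6,4}^{i,i',j}$) are of restricted form, so a careful count of the occurring monomials, as noted before the theorem, gives $\omega=O(d^4k^2)$ new coordinates. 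Each $\rho\in\mathcal{P}$ then becomes a linear form $h_\rho$ with $\rho(\nu)=h_\rho(\phi(\nu))$ for all $\nu\in\real^{dk+1}$, so for the set of $s=O(km^2n)$ hyperplanes $H=\{h_\rho:\rho\in\mathcal{P}\}$ in $\real^\omega$, the sign condition vector of $H$ at $\phi(\nu)$ equals that of $\mathcal{P}$ at $\nu$. Consequently the cell of $\mathscr{A}(H)$ containing $\phi(\nu)$ has the same sign condition vector, and hence the same answer set, as the cell of $\mathscr{A}(\mathcal{P})$ containing $\nu$; note that I do not need $\phi$ to be onto or cell-to-cell, only that sign condition vectors are preserved.

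With the hyperplanes in hand, I would build the structure of Theorem~\ref{thm:locate} on $H$ with $s=O(km^2n)$ and $\omega=O(d^4k^2)$, fixing any $\eps\in(0,1)$: its size is $O(s^{2\omega\log\omega+O(\omega)})=O(kmn)^{O(d^4k^2\log(dk))}$ and it is constructed in $O(s^{\omega+\eps})=O(kmn)^{O(d^4k^2)}$ expected time, which is within the stated bound. To attach the lists, note that $\mathscr{A}(H)$ has $s^\omega\cdot O(1)^\omega=O(kmn)^{O(d^4k^2)}$ cells; for each I would take a representative point, read off its sign condition vector with respect to $H$ (equivalently $\mathcal{P}$), and apply Corollary~\ref{cor:sign} to all $n$ curves in $O(km(k+m)n)$ time to compute $T_C$ --- again absorbed in the size and construction bounds. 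A query $(\sigma,r)$ is then answered by computing $\phi(w_1,\ldots,w_k,r)$ in $(dk)^{O(1)}$ time, locating its cell in $O(\omega^3\log s)=O((dk)^{O(1)}\log(kmn))$ time by Theorem~\ref{thm:locate}, and reporting the stored list in time linear in its size, giving the claimed query bound.

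The main obstacle I anticipate is the linearization step: one must verify that passing from polynomial hypersurfaces to hyperplanes in the lifted space loses no information, which works precisely because each polynomial becomes a genuine linear form in the lifted coordinates and the sign pattern is preserved pointwise, so the answer set is a function of the sign condition vector alone and is insensitive to which connected component the lifted query point falls into. The secondary technical point is pinning the lifted dimension at $O(d^4k^2)$, which requires inspecting the explicit degree-$\ge4$ polynomials and using that each involves only $O(1)$ curve vertices so that only $d^{O(1)}$ monomials actually appear; since $\omega$ enters the size exponent only through $\log\omega$ together with a constant multiple, a cruder count of the monomials would still yield a bound of the same shape.
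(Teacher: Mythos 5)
Your proposal is correct and follows essentially the same approach as the paper: build the arrangement $\mathscr{A}(\mathcal{P})$ of the $O(km^2n)$ polynomials from the VC-dimension section, observe that each cell carries a well-defined answer set, linearize to $O(d^4k^2)$ dimensions, and apply the point location structure of Theorem~\ref{thm:locate}. Your added remarks—that correctness only requires the sign condition vector to be preserved under the lifting map $\phi$ rather than any cell-to-cell correspondence, and the explicit accounting of why the degree-$\geq 4$ polynomials still contribute only $O(d^4)$ monomials per vertex pair—are useful elaborations of points the paper states more tersely, but they are not a different route.
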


\cancel{
	For range searching under the Fr\'echet distance, let $\hat{P}_{\tau_a}$ be the polynomial set constructed for the Fr\'echet distance with respect to $\tau_a\in T$, an unknown $\sigma$ and an unknown value $r$ in section~\ref{sec: polynomial}. All polynomials in $\hat{P}_{\tau_a}$ are in variables $\rho=(\sigma, r)$. When specifying $\sigma$ and $r$, whether the curve $\tau_a$ belongs to the metric ball $B_F(\sigma, r)$ can be determined explicitly by $\text{sign}(\hat{P}_{\tau_a}(\rho))$ according to Corollary~\ref{cor: sign_correspondence}. To deal with all curves in $T$, we take the union $\hat{P}_T=\bigcup_{a\in[n]}\hat{P}_{\tau_a}$. It is clear that $B_F(\sigma, r)\cap T$ can be determined by $\text{sign}(\hat{P}_T(\rho))$. Since all points in a cell of the arrangement $\mathscr{A}(\hat{P}_T)$ realize the same sign condition, a solution set $B_F(\sigma, r)\cap T$ for one point $(\sigma, r)$ in a cell can be shared with all the other points in this cell. Based on this observation, range searching is reduced to a point location in the arrangement $\mathscr{A}(\hat{P}_T)$ as we can compute the solution set for every cell at the preprocessing time.
	
	A natural idea is that after getting $\hat{P}_T$, we compute by Theorem~\ref{thm: pt_for_cell} a point in every cell of $\mathscr{A}(\hat{P}_T)$. Via the sign condition realized by these points, we get the corresponding sign condition for every cell in the arrangement. We can derive a solution set for every sign condition. In the end, we attach a solution set for every cell in the arrangement. In the query phase, when a query curve $\sigma$ and a radius $r$ come, we can compute a solution set for this query by locating $(\sigma, r)$ and return the solution set of the cell where it locates.
	
	To implement the idea, we utilize the \emph{linearization} technique to turn every polynomial in $\hat{P}_{T}$ to a linear formula, i.e., hyperplane, in higher dimension. Take $f(x) = x^2+2x+1$ as an example. To linearize $f(x)$, we introduce one more variable $y=x^2$ so that $f(x)$ become a hyperplane $f'(x,y)$ in $\mathbb{R}^2$. Meanwhile, we can also lift $x$ to $\mathbb{R}^2$ by appending $x^2$ to its end to form $(x, x^2)$. It is clear the sign condition realized by $x$ for $f(x)$ is the same as the sign condition realized by $(x, x^2)$ for $f'(x,y)$. It means that locating $\rho=(\sigma, r)$ in the original arrangement $\mathscr{A}(\hat{P}_T)$ is equivalent to locating the point corresponding to $\rho$ in the higher dimensional space in the arrangement of $\mathscr{A}(\hat{P}_T)$ after linearization. 
	
	
	Next we show how to linearize the polynomials in $\hat{P}_{T}$. Recall that all these polynomials are in variables $(\sigma, r)$. For every $\tau_a\in T$, both $f_1$ and $f_2$ are degree-2 polynomials in $w_1 = (\beta_{1,1},...,\beta_{1,d})$, $w_k=(\beta_{k,1},...,\beta_{k,d})$ and $r$, respectively. Given that $w_1$ and $w_k$ are in $\mathbb{R}^d$, we need to introduce extra variables $\beta_{j,l}^2$ for $j\in\{1,k\}$ and $l\in[d]$ and $r^2$ to linearize $f_1$ and $f_2$. Since $f_{3,1}^{i,j}$, $f_{3,4}^{i,j}$ and $f_{3,5}^{i,j}$ are degree-2 polynomials in $w_j$ and $r$, we linearize them by introducing variables for $\beta_{j, l}\beta_{j,l'}$ for $l, l'\in [d]$, and $r^2$. Both $f_{3,2}^{i,j}$ and $f_{3,3}^{i,j}$ are linear and do not require linearization. For $f_{4,1}^{j,i}$, it a degree-4 polynomial in $w_j$, $w_{j+1}$ and $r$. Moreover, all degree-3 monomials in $f_{4,1}^{j,i}$ contain $\beta_{j, l}^2$ or $\beta_{j+1, l}^2$ or $r^2$, and all degree-4 monomials are composed by $\beta_{j, l}^2$, $\beta_{j+1, l}^2$ and $r^2$. It is sufficient to linearize it by introducing variables for all monomials of degree 2 in $\beta_{j,l}$, $\beta_{j+1, l'}$ and $r$, all monomials in $\beta_{j,1}$, $\beta_{j+1,l'}$ and $r$ of degree 3 that contain $\beta_{j, l}^2$ or $\beta_{j+1, l}^2$ or $r^2$ and all monomials in $\beta_{j,1}$, $\beta_{j+1,l'}$ and $r$ of degree 4 in the form of $\gamma^2\eta^2$, where $\gamma$ and $\eta$ are chosen from $\beta_{j,l}$, $\beta_{j+1,l'}$ for $l,l'\in[d]$ and $r$. The total number of variables introduced for $f_{4,1}^{j,i}$ is $O(d^2)$ for a specific $j$. Given that $f_{4,2}^{j,i}$, $f_{4,3}^{j,i}$, $f_{4,4}^{j,i}$ and $f_{4,5}^{j,i}$ have degree of 2, it is sufficient to linearize it by introducing variables for all monomials of degree 2 in $\beta_{j,l}$ and $\beta_{j+1, l'}$. $f_{5,1}^{i,j,j'}$ is linear and no linearization is needed. $f_{5,2}^{i,j,j'}$ is a degree-2 polynomial, introducing variables for all monomials of degree 2 in $\beta_{j,l}$ and $\beta_{j+1, l'}$ is sufficient. For $f_{5,3}^{i,j,j'}$, since it has degree of 4, we need to introduce extra variables for all monomials of degree 4 in $\beta_{j,l}$, $\beta_{j+1, l'}$ and $r$. The number of the extra variables is $O(d^4)$. $f_{6,1}^{j,i,i'}$ is also linear. For $f_{6,2}^{j,i,i'}$, the linearization is similar to that of $f_{4,1}^{j,i}$ and needs $O(d^2)$ extra variables for a specific $j$. Given that $f_{4,1}^{j,i}$ has degree of 4 and has $O(d^2)$ different monomials after expansion and simplification, $f_{6,3}^{j,i,i'}$ has $O(d^4)$ different monomials. It means that we can linearize it by introducing $O(d^4)$ extra variables.
	
	In summary, to linearize all polynomials in $\hat{P}_{T}$, for each $\tau_a$, we introduce $O(d)$ extra variables for $f_1$ and $f_2$. For $f_{3,1}^{i,j}$, $f_{3,4}^{i,j}$, $f_{3,5}^{i,j}$, $f_{4,1}^{j,i}$, $f_{4,2}^{j,i}$, $f_{4,3}^{j,i}$, $f_{4,4}^{j,i}$ and $f_{4,5}^{j,i}$, we introduce $O(d^2)$ extra variables for a specific $j$. For $f_{5,2}^{i,j,j'}$ and $f_{5,3}^{i,j,j'}$, we introduce $O(d^2)$ and $O(d^4)$ extra variables for specific $j$ and $j'$, respectively. For $f_{6,2}^{j,i,i'}$ and $f_{6,3}^{j,i,i'}$, we introduce $O(d^2)$ and $O(d^4)$ extra variables for specific $j$, respectively. Hence, we can linearize all polynomials in $\hat{P}_{T}$ by lifting them into a space of dimension $\mathbb{R}^{O(k^2d^4)}$. We can lift a point $(\sigma, r)$ to $\mathbb{R}^{O(k^2d^4)}$ in $O(k^2d^4)$ time.
	
	\paragraph{Data structure for range searching.} We first construct a polynomial set $\hat{P}_T$. Linearize all polynomials in $\hat{P}_T$. Via Theorem~\ref{thm: pt_for_cell}, we compute the sign condition of each cell in $\mathscr{A}(\hat{P}_T)$ after linearization. Then we derive the solution set for every cell based on the sign condition. We attach the solution set to the cell. Finally, we construct a point location data structure for $\mathscr{A}(\hat{P}_T)$ after linearization via Theorem~\ref{thm: pt_location}.
	
	\paragraph{Query procedure for range searching.} When a query curve $\sigma\in \mathbb{X}_k^d$ and a radius $r$ come, we lift $(\sigma, r)$ to $\mathbb{R}^{O(k^2d^4)}$, i.e., the higher dimensional space where polynomials in $\hat{P}_T$ are linearized. Locate the corresponding point of $(\sigma,r)$ in the higher dimension and return the solution set of the cell where the point is inside.
	
	
	\begin{theorem}\label{thm:range_searching_F}
		Given a set $T\subset \mathbb{X}_m^d$ of size $n$, there is a data structure of space $(nmk)^{O(k^2d^4)}$ that answers a range searching query with a query curve $\sigma\in \mathbb{X}_k^d$ and a radius $r$ under the Fr\'echet distance in $O(k^{10}d^{20}\log(nmk))$ time.
	\end{theorem}
	
	\begin{proof}
		As shown above, we reduce the range searching problem to the point location in the arrangement specified by the polynomial set $\hat{P}_{T}$. That is, given a query point specified by $(\sigma, r)$, we need to locate the point and return the corresponding solution set of the cell where it is inside. By linearizing all polynomials in $\hat{P}_{T}$, the point location problem with $(\sigma, r)$ in the original arrangement transforms as follows.
		
		We first lift the point $(\sigma, r)$ to $\mathbb{R}^{O(k^2d^4)}$ in $O(k^2d^4)$ time. This can be done because all the values that are supposed to be appended for dimension lifting can be determined explicitly by the $\sigma$ and $r$. Then we locate the point of higher dimensional in the arrangement of the linearized polynomials. Given that the sign condition of the linearized polynomials realized by the new point is equal to the sign condition of the original $\hat{P}_{T}$ realized by $(\sigma, r)$. Point location in the higher dimensional space is sufficient to return the correct solution set for range searching. The space and query time of the data structure is due to Theorem~\ref{thm: pt_location}.
	\end{proof}
	
}

\cancel{
	Distance decision oracle is just a special case of the above theorem for $n = 1$.
	
	\begin{theorem}
		\label{thm:distdec}
		Let $\tau$ be a curve in $\mathbb{X}_m^d$.  Let $k \geq 2$ be a given integer.  We can construct a data structure of $((km)^{O(d^4k^2)})$ size in $O((km)^{O(d^4k^2)})$ expected time such that for any query curve $\sigma \in \mathbb{X}_k^d$ and any $r > 0$, we can decide in $O((dk)^{O(1)}\log(km))$ time whether $d_F(\sigma,\tau) \leq r$.
	\end{theorem}
	
}

For $\hat{d}_F$, the exponents in the space and preprocessing time in Theorem~\ref{thm:range} improve to $O(d^2k \log (dk))$ because we only need to linearize the zero sets of the polynomials for $P_1$, $P_2$, $P_3(i,j)$, and $P_4(i,j)$.

\cancel{
	
	Distance decision under the Fr\'echet distance can be regarded as a special case where $T$ contains only one curve $\tau$. We return true for $d_F(\tau, \sigma)\le r$ if and only if $T\cap B_F(\sigma, r)$ is not empty.
	
	\begin{corollary}\label{cor: distance_decision_F}
		Given a polygonal curve $\tau$ in $\mathbb{X}_m^d$, there is a data structure of space $(mk)^{O(k^2d^4)}$ that determines whether $d_F(\tau, \sigma)\le r$ for a query curve $\sigma\in \mathbb{X}_k^d$ and a radius $r$ in $O(k^{10}d^{20}\log(mk))$ time. 
	\end{corollary}
	
	For the weak Fr\'echet distance, we can use the same idea to do range searching and distance decision. The only difference is that we use the polynomial set $\bigcup_{a\in [n]}\tilde{P}_{\tau_a}$. Given that $\tilde{P}_{\tau_a}$ does not contain $f_{5,1}^{i,j,j'}$, $f_{5,2}^{i,j,j'}$, $f_{5,3}^{i,j,j'}$, $f_{6,1}^{j,i,i'}$, $f_{6,2}^{j,i,i'}$, and $f_{6,3}^{j,i,i'}$, we only need to introduce $O(kd^2)$ extra variables for linearization.
	
	\begin{theorem}\label{thm: range_searching_wF}
		Given a set $T\subset \mathbb{X}_m^d$ of size $n$, there is a data structure of space $(nmk)^{O(kd^2)}$ that answers a range searching query with a query curve $\sigma\in \mathbb{X}_k^d$ and a radius $r$ under the weak Fr\'echet distance in $O(k^5d^{10}\log(nmk))$ time.
	\end{theorem}
	
	\begin{proof}
		The analysis is the same as the proof of Theorem~\ref{thm:range_searching_F}.
	\end{proof}
	
	\begin{corollary}\label{cor: distance_decision_wF}
		Given a polygonal curve $\tau$ in $\mathbb{X}_m^d$, there is a data structure of space $(mk)^{O(kd^2)}$ that determines whether $d_{wF}(\tau, \sigma)\le r$ for a query curve $\sigma\in\mathbb{X}_k^d$ and a radius $r$ in $O(k^5d^{10}\log(mk))$ time. 
	\end{corollary}
}

\subsection{Nearest neighbor and distance oracle.}

We first examine the nearest neighbor query.  Let $T = \{\tau_1,\ldots,\tau_n\}$ be $n$ curves in $\mathbb{X}_m^d$.  Let $k \geq 2$ be a given integer.  We construct the set of polynomials $\mathcal{P}$ for $T$ as in Section~\ref{sec: range_searching} for range searching.  The variables are $r$ and the coordinates of the vertices of $\sigma = (w_1,\ldots,w_k)$.  So we are in $\real^{dk+1}$.  Without loss of generality, let the $r$-axis be the vertical axis.

Let $\mathcal{K}$ be the subset of cells in $\mathscr{A}(\mathcal{P})$ that represent non-empty range searching results.  Let $\bigcup \mathcal{K}$ denote the union of cells in $\mathcal{K}$.  Observe that $\bigcup \mathcal{K}$ is \emph{upward monotone} in the sense that its intersection with any vertical line is either empty or a halfline that extends vertically upward.  It is because if $(\sigma,r) \in \bigcup \mathcal{K}$, there is a non-empty subset $T' \subseteq T$ such that all curves in $T'$ are within a Fr\'{e}chet distance of $r$ from $\sigma$; therefore, for any $r' > r$, all curves in $T'$ are also within a Fr\'{e}chet distance of $r'$ from $\sigma$.  The upward monotonicity of $\bigcup \mathcal{K}$ allows us to show the next result.

\begin{lemma}
	\label{lem:lower}
	The lower boundary of $\bigcup \mathcal{K}$ is $\bigcup \mathcal{L}$ for some subset $\mathcal{L} \subseteq \mathcal{K}$.   Moreover, $\mathcal{L}$ can be constructed in $O(kmn)^{O(dk)}$ time.
\end{lemma}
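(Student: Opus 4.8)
The plan is to identify $U:=\bigcup\mathcal{K}$ with the epigraph of $\delta(\sigma):=\min_{a\in[n]}d_F(\sigma,\tau_a)$ and then to show that the lower boundary of this epigraph is a union of cells of $\mathscr{A}(\mathcal{P})$, by arguing that membership in the lower boundary is determined by the sign condition vector of $\mathcal{P}$ alone. First I would note that, by Corollary~\ref{cor:sign}, the truth value of ``$d_F(\sigma,\tau_a)\le r$ for some $a\in[n]$'' at a point $(\sigma,r)$ depends only on the cell of $\mathscr{A}(\mathcal{P})$ containing $(\sigma,r)$; hence $\mathcal{K}$ is exactly the set of cells on which this holds and $U=\{(\sigma,r):r\ge\delta(\sigma)\}$. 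Since moving a vertex of $\sigma$ by $\varepsilon$ changes each $d_F(\sigma,\tau_a)$ by at most $\varepsilon$, the function $\delta$ is continuous, so $U$ is closed; combined with the upward monotonicity of $U$ observed above, this makes the lower boundary of $U$ coincide with its topological boundary $\partial U=\{(\sigma,\delta(\sigma)):\sigma\in\mathbb{X}_k^d\}$, the graph of $\delta$.

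Next I would prove that $\partial U$ is a union of cells of $\mathscr{A}(\mathcal{P})$, necessarily lying in $\mathcal{K}$. A point $(\sigma,r)$ lies on $\partial U$ iff $\delta(\sigma)=r$, i.e.\ iff [$d_F(\sigma,\tau_a)\le r$ for some $a$] and [$d_F(\sigma,\tau_a)\ge r$ for every $a$]. The first conjunct is a fixed Boolean function of the sign condition vector of $\mathcal{P}$ by Corollary~\ref{cor:sign}. For the second, I would write $d_F(\sigma,\tau_a)\ge r$ as $\neg\bigl(d_F(\sigma,\tau_a)<r\bigr)$ and check that ``$d_F(\sigma,\tau_a)<r$'' is also a Boolean function of the sign condition vector: the predicates $P_1,\dots,P_6$ and their polynomial encodings in Section~\ref{sec: polynomial} admit strict analogues, obtained by replacing the relevant weak inequalities among the $f$'s by strict ones (with the obvious treatment of the endpoint-projection cases), and Lemma~\ref{lem:predicate} holds verbatim with ``$<$'' in place of ``$\le$'', being just the Alt--Godau reachability argument run on the open free-space diagram $\{d(\cdot,\cdot)<r\}$. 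It follows that ``$(\sigma,r)\in\partial U$'' is a fixed Boolean function of the sign condition vector of $\mathcal{P}$; as this vector is constant on each cell, $\partial U$ is precisely the union of those cells whose sign condition satisfies that function. Because $\partial U\subseteq U=\bigcup\mathcal{K}$ and every cell is either contained in $U$ or disjoint from it, all these cells lie in $\mathcal{K}$; taking $\mathcal{L}$ to be this family yields $\partial U=\bigcup\mathcal{L}$ with $\mathcal{L}\subseteq\mathcal{K}$. (A more elementary-looking route via $\partial U=\overline{U}\cap\overline{U^{c}}$ looks risky, since for a general polynomial arrangement the closure of a cell need not be a union of cells; the sign-condition argument sidesteps this.)

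For the construction and the running time, recall that $\mathcal{P}$ has $s=O(km^2n)$ polynomials of $O(1)$ degree in $\omega=dk+1$ variables. By Theorem~\ref{thm:arr}(ii) one obtains, in $s^{\omega+1}\cdot O(1)^{O(\omega)}=O(kmn)^{O(dk)}$ time, a set of points meeting every cell of $\mathscr{A}(\mathcal{P})$ together with their sign condition vectors; using Corollary~\ref{cor:sign} one first discards the cells not in $\mathcal{K}$. For each surviving cell $C$ with representative point $(\sigma^*,r^*)$, one decides whether $C\in\mathcal{L}$ by testing the Boolean condition of the previous paragraph on the sign condition vector at $(\sigma^*,r^*)$ --- or, concretely and without invoking the strict predicates, by computing $\min_{a}d_F(\sigma^*,\tau_a)$ exactly with the Alt--Godau algorithm and checking whether it equals $r^*$; either test runs in $\mathrm{poly}(kmn)$ time. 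Since $\mathscr{A}(\mathcal{P})$ has $O(kmn)^{O(dk)}$ cells, the whole computation takes $O(kmn)^{O(dk)}$ time.

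The main obstacle is the second paragraph: establishing that whether $d_F(\sigma,\tau_a)<r$ --- equivalently, whether $(\sigma,r)$ lies in the interior of $U$ --- is determined by the sign condition vector of $\mathcal{P}$, which amounts to carrying the strict analogues of $P_1,\dots,P_6$ and of Lemma~\ref{lem:predicate} faithfully through the construction of Section~\ref{sec: polynomial}. Once ``lying on $\partial U$'' is tied to the sign condition vector, the rest is routine bookkeeping with Theorem~\ref{thm:arr}.
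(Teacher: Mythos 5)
Your approach is genuinely different from the paper's and has a real (though likely fillable) gap at exactly the point you flag. The paper never introduces strict predicates at all. It proves the structural claim by a purely topological argument: assume a cell $C \in \mathcal{K}$ lies partially on the lower boundary, take a transition point $p$ in $C$, and derive a contradiction by shooting a vertical ray downward from $p$ (either the ray hits another cell of $\mathcal{K}$, contradicting that $p$ is on the lower boundary, or a whole relative neighborhood $N_p$ of $p$ in $C$ also shoots clean rays, contradicting that $p$ is a transition point). For the construction, the paper again shoots vertical rays from a representative point of each $C \in \mathcal{K}$: it substitutes the $\sigma$-coordinates into every polynomial defining each other cell $C' \in \mathcal{K}$, solves the resulting univariate (quadratic/biquadratic in $r$) conditions, and intersects the ranges to decide whether the ray below $p$ hits $C'$. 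This uses only Corollary~\ref{cor:sign} and upward monotonicity --- no claim about what the sign vector says about $d_F < r$.

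You instead characterize the lower boundary as the graph $\{r=\delta(\sigma)\}$ of $\delta=\min_a d_F(\cdot,\tau_a)$ and argue that membership in this graph is a fixed Boolean function of the sign condition vector, via the assertion that $d_F(\sigma,\tau_a)<r$ is also decidable from the sign vector by ``strict analogues'' of $P_1,\dots,P_6$ and Lemma~\ref{lem:predicate}. This is the load-bearing step and it is asserted, not proved. It is plausible --- the $f$-polynomials record exactly the degeneracies (e.g.\ $f_{3,1}^{i,j}=0$ when a closed free-space interval collapses to a point), so distinguishing $<$ from $\le$ via the sign of each $f$ is the right instinct --- but the Alt--Godau decision procedure is stated and proved for closed free space, and one must check that propagating \emph{open} reachability intervals is governed by the \emph{same} Boolean combination of the strictified predicates, through all the case splits of Section~\ref{sec: polynomial}, most delicately for $P_5,P_6$ whose encodings go through square-root manipulations and the polynomials $f_{5,3},f_{5,4}$. ``The obvious treatment of the endpoint-projection cases'' is doing a lot of work there. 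Until that verification is written out, the structural half of the lemma is not established by your argument; the paper's ray-shooting proof sidesteps the issue entirely. (Your alternative construction --- evaluating $\min_a d_F(\sigma^*,\tau_a)$ at a representative point and comparing to $r^*$ --- is fine as an algorithm, but it presupposes the structural part, so it does not rescue the gap.)
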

\begin{proof}
It suffices to prove that a cell $C \in \mathcal{K}$ cannot lie partially in the lower boundary of $\bigcup \mathcal{K}$.  Assume to the contrary that a portion of $C$ lies in the lower boundary of $\bigcup \mathcal{K}$, but a portion of $C$ does not.  It follows that there is a point $p$ in the interior of $C$ such that $p$ lies in the lower boundary of $\bigcup \mathcal{K}$, but any arbitrarily small open neighborhood of $p$ in $C$ contains a point of $C$ that does not lie in the lower boundary of $\bigcup \mathcal{K}$.  Shoot a ray $\gamma$ vertical downward from $p$.  If $\gamma$ intersects another cell in $\mathcal{K}$, then $p$ cannot lie in the lower boundary of $\bigcup \mathcal{K}$, a contradiction.  Suppose that $\gamma$ does not intersect another cell in $\mathcal{K}$.  Then, there must exist some open neighborhood $N_p$ of $p$ in $C$ such that one can shoot vertical rays downward from points in $N_p$ without intersecting another cell in $\mathcal{K}$.  But then $N_p$ must be part of the lower boundary of $\bigcup \mathcal{K}$, a contradiction to what we said earlier about arbitrarily small open neighborhoods of $p$ in $C$.    This completes the proof of the first part of the lemma.

We construct $\mathcal{L}$ as follows.  First, by Theorem~\ref{thm:arr}(ii), we spend $O(kmn)^{O(dk)}$ time to construct a set $Q$ of points that contain points in each cell in $\mathscr{A}(\mathcal{P})$.  The sign condition vectors at the points in $Q$ are also computed.  Note that the cardinality of $Q$ is $O(kmn)^{O(dk)}$.  For every cell in $\mathscr{A}(\mathcal{P})$, a point in $Q \cap C$ represents a curve $\sigma$ and a value $r$; we check whether $d_F(\sigma,\tau_a) \leq r$ for all $a \in [n]$ in $O(kmn\log(km))$ time, which tells us whether $C \in \mathcal{K}$.  For every cell $C \in \mathcal{K}$, we shoot a vertical ray downward from a point $p \in Q \cap C$ to see if the ray intersects another cell in $\mathcal{K}$.  This check is done as follows.  Take any other cell $C' \in \mathcal{K}$.   The sign condition vector of a point in $Q \cap C'$ gives the polynomial inequalities and equalities that describe $C'$.  Let $p = (w_1,\ldots,w_k,r_0)$.  Plug $w_1,\ldots,w_k$ into the polynomials in the description of $C'$.  Note that the polynomials are independent of $r$, or quadratic in $r$, or biquadratic in $r$.  In $O(dkm^2n)$ time, we can solve for the conditions on $r$ imposed by the polynomials in the description of $C'$.  In general, each polynomial equality/inequality specify $O(1)$ disjoint ranges of $r$ for which the polynomial equality/inequality is satisfied.  Starting with the range $r < r_0$ imposed by $p$, we can examine each polynomial in turn to ``accumulate'' the disjoint ranges of $r$.  That is, if $\mathcal{R}$ is the current list of disjoint ranges of $r$, then for every range $R$ of $r$ given by the next polynomial equality/inequality, we  compute the new ranges $\{R \cap R' : R' \in \mathcal{R}\}$.  The cardinality of $\mathcal{R}$ increases by $O(1)$ after processing each polynomial equality/inequality.   Hence, we can decide in $\tilde{O}(km^2n)$ time whether the downward ray from $p$ intersects $C'$ or not, which means that we can decide in $O(kmn)^{O(dk)}$ time whether $C$ belongs to $\mathcal{L}$.  In all, we can construct $\mathcal{L}$ in $|\mathcal{L}| \cdot O(kmn)^{O(dk)} = O(kmn)^{O(dk)}$ time.
\end{proof}


We are interested in $\mathcal{L}$ because for any $\sigma \in \mathbb{X}_k^d$, if $(\sigma,r)$ belongs to some cell $C \in \mathcal{L}$, then $C$ must represent some curve $\tau_a \in T$ in the range searching result using $(\sigma,r)$.  It follows that $\tau_a$ is a nearest neighbor of $\sigma$.  Therefore, we want to perform point location in the vertical projection of the cells in $\mathcal{L}$ into $\real^{dk}$.  By Lemma~\ref{lem:proj}, the polynomials that define the downward projection of $\mathcal{L}$ have constant degree, and they can be computed in $|\mathcal{L}| \cdot O(kmn)^{O(dk)} = O(kmn)^{O(dk)}$ time.  It follows that there are  $O(kmn)^{O(dk)}$ polynomials that define the projection of $\mathcal{L}$ in $\real^{dk}$.  To support point location in the projection of $\mathcal{L}$, we linearize the zero sets of these polynomials in the projection to form hyperplanes in $(dk)^{O(1)}$ dimensions.  Then, we apply Theorem~\ref{thm:locate} to these hyperplanes to obtain a point location data structure.   Doing the point location in this arrangement of hyperplanes gives the nearest neighbor of $\sigma$.  In order to report the nearest neighbor distance, each cell $C$ in the projection of $\mathcal{L}$ corresponds to a cell $\hat{C}$ in the arrangement of the hyperplanes after linearization, so we store at $\hat{C}$ one of the polynomial equalities in the definition of the preimage of $C$ that involves $r$.  Then, after the point location, we can solve that polynomial equality for $r$.  Every polynomial is quadratic or biquadratic in $r$, so it can be solved in $O(d)$ time.

\begin{theorem}
	\label{thm:nearest}
	Let $T = \{\tau_1,\ldots,\tau_n\}$ be $n$ curves in $\mathbb{X}_m^d$.  Let $k \geq 2$ be a given integer.  We can construct a data structure of $O(kmn)^{\mathrm{poly}(d,k)}$ size in $O(kmn)^{\mathrm{poly}(d,k)}$ expected time such that for any $\sigma \in \mathbb{X}_k^d$, we can find its nearest neighbor in $T$ under $d_F$ or $\hat{d}_F$ in $O((dk)^{O(1)}\log (kmn))$ time.  The nearest neighbor distance is reported within the same time bound.
\end{theorem}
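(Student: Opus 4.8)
The plan is to turn the query into a single point location in an arrangement of hyperplanes. Reuse the polynomial set $\mathcal{P}=\bigcup_{a=1}^n\mathcal{P}_a$ of $O(km^2n)$ polynomials in the $dk+1$ variables $(w_1,\dots,w_k,r)$ built in Section~\ref{sec: range_searching}, so that each cell of $\mathscr{A}(\mathcal{P})$ in $\{r\ge 0\}$ carries a fixed range-searching answer (Corollary~\ref{cor:sign}). Let $\mathcal{K}$ be the cells whose answer is non-empty. As noted above, $\bigcup\mathcal{K}$ is upward monotone in $r$, and for every $\sigma\in\mathbb{X}_k^d$ the vertical line through $\sigma$ meets it in a non-empty upward halfline (take $r$ large enough that all of $T$ lies within distance $r$ of $\sigma$). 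Its lowest point $(\sigma,r^\ast)$ satisfies $r^\ast=\min_a d_F(\sigma,\tau_a)$: the cell containing $(\sigma,r^\ast)$ witnesses some $\tau_a$ within distance $r^\ast$, while for every $r<r^\ast$ the point $(\sigma,r)\notin\bigcup\mathcal{K}$, so no $\tau_a$ is within distance $r$. Hence it suffices, given $\sigma$, to find the lower-boundary cell of $\bigcup\mathcal{K}$ lying over $\sigma$ and to read from it both a nearest curve and the value $r^\ast$.

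\textbf{Preprocessing.} By Lemma~\ref{lem:lower}, the lower boundary of $\bigcup\mathcal{K}$ is $\bigcup\mathcal{L}$ for a subset $\mathcal{L}\subseteq\mathcal{K}$ of size $O(kmn)^{O(dk)}$ that is constructible in $O(kmn)^{O(dk)}$ time. To each $C\in\mathcal{L}$ I attach: (i) a curve $\tau_{a(C)}\in T$ in $C$'s range-searching answer, read off from the sign-condition vector of a representative point of $C$ (Theorem~\ref{thm:arr}(ii), Corollary~\ref{cor:sign}); and (ii) one polynomial $g_C\in\mathcal{P}$ that vanishes identically on $C$. Since every polynomial of $\mathcal{P}$ is quadratic or biquadratic in $r$, fixing $\sigma$ in $g_C$ leaves a univariate equation in $r$ with $O(1)$ roots, of which exactly one is consistent with the remaining strict sign constraints of $C$, and that root equals $r^\ast$ over $\sigma$. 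I then orthogonally project every $C\in\mathcal{L}$ along the $r$-axis into $\real^{dk}$ using Lemma~\ref{lem:proj}, obtaining $O(kmn)^{O(dk)}$ polynomials of $O(1)$ degree in $(w_1,\dots,w_k)$; by Theorem~\ref{thm:arr}(ii) I compute a representative point in every cell of the arrangement of these projected polynomials in $\real^{dk}$ and attach to each such cell the pair $(\tau_{a(C)},g_C)$ of the (unique) $\mathcal{L}$-cell whose projection interior contains it.

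\textbf{Point location and query.} Linearizing the zero sets of these $O(kmn)^{O(dk)}$ projected polynomials --- one new variable per monomial, hence $(dk)^{O(1)}$ of them since each polynomial has $O(1)$ degree in $dk$ variables --- yields $O(kmn)^{O(dk)}$ hyperplanes in $D=(dk)^{O(1)}$ dimensions. Theorem~\ref{thm:locate} builds a point-location structure of size $O(kmn)^{\mathrm{poly}(d,k)}$ in $O(kmn)^{\mathrm{poly}(d,k)}$ expected time, with query time $O(D^3\log(kmn))=O((dk)^{O(1)}\log(kmn))$. Given a query $\sigma$, lift it into $\real^D$, locate its cell, retrieve the stored pair $(\tau_a,g)$, solve $g(\sigma,r)=0$ for $r$ in $O(d)$ time and pick the root compatible with the cell's sign pattern; then $\tau_a$ is a nearest neighbor of $\sigma$ and $r$ its distance. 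For $\hat{d}_F$ everything is identical with $\mathcal{P}$ replaced by $\widehat{\mathcal{P}}$ and the second statement of Corollary~\ref{cor:sign}; the bounds are unchanged. Collecting the estimates above gives the stated space, preprocessing time, and query time.

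\textbf{Main obstacle.} The delicate part is not any calculation but the correctness of reading off the \emph{exact} distance, which rests on three facts that must be pinned down rather than computed: that the lower boundary of $\bigcup\mathcal{K}$ is genuinely a union of whole cells of $\mathscr{A}(\mathcal{P})$ (this is Lemma~\ref{lem:lower}, which makes $(\tau_{a(C)},g_C)$ well defined per cell and survives projection); that on such a boundary cell some defining polynomial vanishes and, with $\sigma$ substituted, the resulting low-degree univariate equation in $r$ isolates $r^\ast$ among its roots via the cell's remaining sign constraints; and that the orthogonal projections of distinct $\mathcal{L}$-cells overlap only on lower-dimensional sets, so that refining $\real^{dk}$ by the full arrangement of projected hypersurfaces produces cells on which a single responsible pair is valid (lower-dimensional cells of $\mathcal{L}$ project to lower-dimensional sets and are absorbed by this arrangement as well). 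A lesser point to verify is that linearization keeps the ambient dimension $(dk)^{O(1)}$, which holds because every polynomial of $\mathcal{P}$ involves at most two vertices of $\sigma$ together with $r$ and has degree at most $8$, exactly as in the range-searching analysis.
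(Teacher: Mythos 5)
Your proof follows the paper's own route exactly: invoke Lemma~\ref{lem:lower} to obtain the lower-boundary cells $\mathcal{L}$ of $\bigcup\mathcal{K}$, project them to $\real^{dk}$ via Lemma~\ref{lem:proj}, linearize and point-locate via Theorem~\ref{thm:locate}, and recover the distance by storing and solving one $r$-dependent polynomial equality per cell. The extra detail you supply (representative points in the projected arrangement, root selection by sign consistency) is a useful elaboration of steps the paper treats tersely, but it is not a different argument.
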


Theorem~\ref{thm:nearest} gives a distance oracle in the special case of $n = 1$.

\begin{theorem}
	\label{thm:oracle}
	Let $\tau$ be a curve in $\mathbb{X}_m^d$.  Let $k \geq 2$ be a given integer.  We can construct a data structure of $O(km)^{\mathrm{poly}(d,k)}$ size in $O(km)^{\mathrm{poly}(d,k)}$ expected time such that for any $\sigma \in \mathbb{X}_k^d$, we can return $d_F(\sigma,\tau)$ or $\hat{d}_F(\sigma,\tau)$ in $O((dk)^{O(1)}\log (km))$ time.
\end{theorem}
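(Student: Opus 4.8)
The plan is to obtain Theorem~\ref{thm:oracle} as the $n=1$ specialization of Theorem~\ref{thm:nearest}. First I would set $T=\{\tau\}$ and run the nearest neighbor construction verbatim: build the polynomial set $\mathcal{P}$ of Section~\ref{sec: range_searching} for the single curve $\tau$, the unknown $\sigma=(w_1,\ldots,w_k)$, and the unknown radius $r$; form the arrangement $\mathscr{A}(\mathcal{P})$ in $\real^{dk+1}$; extract via Lemma~\ref{lem:lower} the subfamily $\mathcal{L}$ of cells forming the lower boundary of $\bigcup\mathcal{K}$; project $\mathcal{L}$ downward into $\real^{dk}$ using Lemma~\ref{lem:proj}; linearize the zero sets of the constant-degree polynomials that define this projection into hyperplanes in $(dk)^{O(1)}$ dimensions; and build the point-location structure of Theorem~\ref{thm:locate}, storing at each cell one polynomial equality in $r$ taken from the preimage. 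Substituting $n=1$ into the bounds of Theorem~\ref{thm:nearest} gives $O(km)^{\mathrm{poly}(d,k)}$ space and expected preprocessing time and $O((dk)^{O(1)}\log(km))$ query time, as required.

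The only thing that needs checking is that the returned value is $d_F(\sigma,\tau)$. When $T$ has the single element $\tau$, the nearest neighbor of any $\sigma\in\mathbb{X}_k^d$ is $\tau$ itself, and by the upward monotonicity of $\bigcup\mathcal{K}$ established before Lemma~\ref{lem:lower}, for each fixed $\sigma$ the set of feasible $r$ is exactly the closed halfline $[\,d_F(\sigma,\tau),\infty)$; hence its lower boundary $\bigcup\mathcal{L}$ is the graph of $\sigma\mapsto d_F(\sigma,\tau)$ over its domain. So locating the projection of $\sigma$ among the linearized cells and solving the stored (quadratic or biquadratic, by Corollary~\ref{cor:sign} and the degree bounds of Section~\ref{sec: polynomial}) equality in $r$ in $O(d)$ time recovers $d_F(\sigma,\tau)$ exactly. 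Replacing $\mathcal{P}$ by $\widehat{\mathcal{P}}$ throughout gives the same statement for $\hat{d}_F$ by the second half of Corollary~\ref{cor:sign}.

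I do not expect a real obstacle, since this is a direct corollary; the one line needing care is the identification of $\bigcup\mathcal{L}$ with the graph of the distance function just sketched. (If one wants the subcurve version advertised in the introduction, where $\tau'\subseteq\tau$ is delimited by two arbitrary points on $\tau$, one additionally enumerates the $O(m^2)$ combinatorial choices of which edges of $\tau$ contain those two points, adjoins two scalar parameters locating them — distances to a point moving linearly along an edge remain constant-degree polynomials in such a parameter, so every polynomial in $\mathcal{P}$ stays of constant degree — and builds one oracle per choice; this only multiplies the space and preprocessing by $O(m^2)$, which is absorbed into $O(km)^{\mathrm{poly}(d,k)}$, and leaves the query time unchanged up to the $O(m^2)$-way selection at the front.)
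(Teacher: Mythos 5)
Your proposal is correct and takes exactly the paper's route: the paper's entire proof of Theorem~\ref{thm:oracle} is the single remark that it is the $n=1$ specialization of Theorem~\ref{thm:nearest}, and your correctness check (that for $T=\{\tau\}$ the lower boundary $\bigcup\mathcal{L}$ is the graph of $\sigma\mapsto d_F(\sigma,\tau)$ and the stored quadratic/biquadratic equality recovers $r$) is a valid and slightly more explicit justification of that step. Your parenthetical on subcurves anticipates the separate Theorem~\ref{thm:oracle-subcurve}, again by the same enumeration and parametrization the paper uses.
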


Next, we discuss how to extend Theorem~\ref{thm:oracle} to allow a query to be performed on a subcurve $\tau' \subseteq \tau$.  The subcurve $\tau'$ can be delimited in one of two ways.  First, $\tau'$ can be delimited by two points that lie on two distinct edges of $\tau$.  Second, $\tau'$ can be a subset of some edge of $\tau$.

For the first possibility, we enumerate $\tau_{i,i'} = (v_i,\ldots,v_{i'})$ for all $i \in [m-1]$ and $i' \in [i+1,m]$.  We use $\hat{\tau}_{i,i'}$ to denote a subcurve of $\tau$ that is delimited by two points on $v_{i}v_{i+1}$ and $v_{i'-1}v_{i'}$.  We can represent the endpoints of $\hat{\tau}_{i,i'}$  as $(1-\beta)v_{i} + \beta v_{i+1}$ and $(1-\gamma)v_{i'-1} + \gamma v_{i'}$ for some $\beta,\gamma \in (0,1)$.   In the formulation of the polynomials for $P_1$, $P_2$, $P_3(i,j)$'s, $P_4(i,j)$'s, $P_5(i,j,j')$'s, and $P_6(i,i',j)$'s for $\hat{\tau}_{i,i'}$, we follow their formulations for the curve $\tau_{i,i'}$ except that every reference to $v_{i}$ is replaced by  $(1-\beta)v_{i} + \beta v_{i+1}$ and every reference to $v_{i'}$ is replaced by $(1-\gamma)v_{i'-1} + \gamma v_{i'}$.  We also need the polynomials $\beta, 1-\beta, \gamma, 1-\gamma$ in order to check $\beta, \gamma \in (0,1)$.

For the second possibility, we enumerate $\tau_{i} = (v_{i}, v_{i+1})$ for all $i \in [m-1]$.  We use $\hat{\tau}_{i}$ to denote a subcurve of $\tau$ that is delimited by two points on $v_{i}v_{i+1}$.  We can represent the endpoints of $\hat{\tau}_{i}$  as $(1-\beta)v_{i} + \beta v_{i+1}$ and $(1-\gamma)v_{i} + \gamma v_{i+1}$ for some $\beta,\gamma \in (0,1)$. In the formulation of the polynomials for $P_1$, $P_2$, $P_3(i,j)$'s, $P_4(i,j)$'s, $P_5(i,j,j')$'s, and $P_6(i,i',j)$'s for $\hat{\tau}_{i}$, we follow their formulations for $\tau_{i}$ except that every reference to $v_{i}$ is replaced by  $(1-\beta)v_{i} + \beta v_{i+1}$ and every reference to $v_{i+1}$ by $(1-\gamma)v_{i} + \gamma v_{i+1}$.    We also need the polynomials  $\beta, 1-\beta, \gamma, 1-\gamma, \gamma-\beta$ in order to check $\beta, \gamma \in (0,1)$ and $\beta < \gamma$.

For every $i \in [m-1]$ and every $i' \in [i+1,m]$, we have a set $\mathcal{P}_{i,i'}$ of polynomials constructed for $\tau_{i,i'}$ and another set $\hat{\mathcal{P}}_{i,i'}$ of polynomials constructed for $\hat{\tau}_{i,i'}$.  For every $i \in [m-1]$, we also have a set $\hat{\mathcal{P}}_i$ of polynomials constructed for $\hat{\tau}_i$.  The polynomials in each $\mathcal{P}_{i,i'}$ have $dk+1$ variables, so we apply Theorem~\ref{thm:oracle} to construct a data structure $D_{i,i'}$ of $O((km)^{\mathrm{poly}(d,k)})$ size and $O((dk)^{O(1)}\log (km))$ query time.  The polynomials in each $\hat{\mathcal{P}}_{i,i'}$ have $dk+3= \Theta(dk)$ variables, so we can still construct a data structure $\hat{D}_{i,i'}$ using the techniques used for showing Theorem~\ref{thm:oracle}.  The data structure $\hat{D}_{i,i'}$ also has $O((km)^{\mathrm{poly}(d,k)})$ size and $O((dk)^{O(1)}\log (km))$ query time.  Similarly, we also construct a data structure $\hat{D}_i$ for $\hat{P}_i$.  

At query time, we are given the query curve $\sigma$ and the subcurve $\tau' \subseteq \tau$.   If $\tau' = \tau_{i,i'}$ for some $i, i'$, we query $D_{i,i'}$ using $\sigma$ as we explained in showing  Theorems~\ref{thm:nearest} and~\ref{thm:oracle}.  If $\tau' = \hat{\tau}_{i,i'}$ for some $i, i'$, then $\beta$ and $\gamma$ are also specified.  Therefore, we are also doing a point location using $(\sigma,\beta,\gamma)$ in the orthogonal projection of $\mathscr{A}(\hat{\mathcal{P}}_{i,i'})$ onto the $\mathbb{R}^{dk+2}$.  Therefore, the same query strategy works.  If $\tau' = \hat{\tau}_i$ for some $i$, we query $\hat{D}_i$.

\begin{theorem}
	\label{thm:oracle-subcurve}
	Let $\tau$ be a curve in $\mathbb{X}_m^d$.  Let $k \geq 2$ be a given integer.  We can construct a data structure of $O(km)^{\mathrm{poly}(d,k)}$ size in $O(km)^{\mathrm{poly}(d,k)}$ expected time such that for any $\sigma \in \mathbb{X}_k^d$ and any subcurve $\tau' \subseteq \tau$, we can return $d_F(\sigma,\tau')$ or $\hat{d}_F(\sigma,\tau')$ in $O((dk)^{O(1)}\log (km))$ time.  The subcurve $\tau'$ are delimited by two points on $\tau$, not necessarily vertices.
\end{theorem}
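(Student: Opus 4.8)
The plan is to reduce a subcurve query to $O(m^2)$ separate instances of the distance-oracle machinery behind Theorem~\ref{thm:oracle}, one instance for each combinatorial placement of the two points delimiting $\tau'$. A subcurve $\tau' \subseteq \tau$ is determined by an ordered pair of points on $\tau$, and there are three types: $\tau_{i,i'} = (v_i,\ldots,v_{i'})$ delimited by vertices; $\hat{\tau}_{i,i'}$ delimited by two points interior to two distinct edges $v_iv_{i+1}$ and $v_{i'-1}v_{i'}$; and $\hat{\tau}_i$ delimited by two points interior to a single edge $v_iv_{i+1}$. First I would build, for every $i$ and $i'$, the data structure $D_{i,i'}$ of Theorem~\ref{thm:oracle} with $\tau$ replaced by $\tau_{i,i'}$; this already handles all vertex-to-vertex queries with size $O((km)^{\mathrm{poly}(d,k)})$ and query time $O((dk)^{O(1)}\log(km))$.

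For the two types with interior delimiters, I would introduce two fresh scalar variables $\beta,\gamma \in (0,1)$ and write the delimiting endpoints as affine combinations: $(1-\beta)v_i + \beta v_{i+1}$ and $(1-\gamma)v_{i'-1} + \gamma v_{i'}$ for $\hat{\tau}_{i,i'}$, and $(1-\beta)v_i + \beta v_{i+1}$ and $(1-\gamma)v_i + \gamma v_{i+1}$ for $\hat{\tau}_i$ (with the extra constraint $\beta < \gamma$). Then I would re-run the construction of Section~\ref{sec: polynomial} verbatim, except that every reference to the coordinates of a delimiting vertex is now an affine expression in $\beta$ (or $\gamma$) and the fixed coordinates of $\tau$; I would also add the linear polynomials $\beta, 1-\beta, \gamma, 1-\gamma$ (and $\gamma - \beta$ for $\hat{\tau}_i$). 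Since each polynomial produced in Section~\ref{sec: polynomial} already has $O(1)$ degree jointly in the coordinates of its (at most two) curve points and in $r$, substituting affine functions of $\beta,\gamma$ for those coordinates keeps the degree $O(1)$, now in $dk+1+2 = \Theta(dk)$ variables; Corollary~\ref{cor:sign} applies unchanged because the predicates $P_1,\ldots,P_6$ only reference point coordinates, never the fact that a point is a vertex. Running the argument of Theorem~\ref{thm:oracle} (i.e., Lemma~\ref{lem:lower}, Lemma~\ref{lem:proj}, linearization, and Theorem~\ref{thm:locate} with $n=1$) on this enlarged polynomial set yields a data structure $\hat{D}_{i,i'}$ (resp. $\hat{D}_i$) of the same asymptotic size and query time; the only difference is that the query point now carries the extra coordinates $\beta,\gamma$, so the final point location takes place in the orthogonal projection of the arrangement onto the $\mathbb{R}^{dk+2}$ slice holding $(\sigma,\beta,\gamma)$, which Lemma~\ref{lem:proj} still describes by $O(1)$-degree polynomials.

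Summing over the $O(m^2)$ choices of $(i,i')$ keeps the total space and preprocessing time at $O(km)^{\mathrm{poly}(d,k)}$. At query time, given $\sigma$ and $\tau'$, I read off which type $\tau'$ has and the relevant indices (together with the values of $\beta,\gamma$ when $\tau'$ has interior delimiters), then query the single matching data structure in $O((dk)^{O(1)}\log(km))$ time, solving the stored $r$-equation (quadratic or biquadratic in $r$) to report the distance. Repeating the whole development with $\widehat{\mathcal{P}}$ in place of $\mathcal{P}$ gives the bound for $\hat{d}_F$.

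The step I expect to be the main obstacle is verifying that the affine substitution does not inflate polynomial degrees: one must check that no construction in Section~\ref{sec: polynomial} ever multiplies three or more ``delimiting-vertex'' factors in a way that, after substitution, produces terms of super-constant degree in $\beta$ or $\gamma$. Because each $f$ is already $O(1)$-degree in all of its point coordinates taken together, this follows, but it should be spelled out; one should also confirm that only two extra variables are added (so the $\mathrm{poly}(d,k)$ exponents of Theorem~\ref{thm:oracle} are unaffected), and that the degenerate configurations $\beta\in\{0,1\}$, $\gamma\in\{0,1\}$, $\beta=\gamma$ — where $\tau'$ meets a vertex, shares an endpoint with a vertex, or collapses to a point — are either ruled out by the strict-inequality polynomials or subsumed by the vertex-to-vertex data structures $D_{i,i'}$.
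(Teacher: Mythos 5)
Your proposal matches the paper's own proof essentially step for step: the paper likewise splits subcurves into vertex-to-vertex curves $\tau_{i,i'}$, curves $\hat{\tau}_{i,i'}$ delimited by interior points of two distinct edges parameterized as $(1-\beta)v_i+\beta v_{i+1}$ and $(1-\gamma)v_{i'-1}+\gamma v_{i'}$, and same-edge curves $\hat{\tau}_i$, adds the linear guard polynomials $\beta,1-\beta,\gamma,1-\gamma$ (plus $\gamma-\beta$ for $\hat{\tau}_i$), substitutes these affine expressions into the Section~\ref{sec: polynomial} polynomials to get $O(1)$-degree polynomials in $dk+3=\Theta(dk)$ variables, builds $D_{i,i'}$, $\hat{D}_{i,i'}$, $\hat{D}_i$ via the Theorem~\ref{thm:oracle} machinery, and at query time routes $(\sigma,\beta,\gamma)$ to the matching structure. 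Your added scrutiny of degree control, variable count, and the degenerate boundary cases goes slightly beyond what the paper spells out but does not change the argument.
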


\section{Conclusion.}

We demonstrate a connection between (weak) Fr\'{e}chet distance problems and algebraic geometry that allows us to obtain improved VC dimension bounds and exact algorithms for several fundamental problems concerning (weak) Fr\'{e}chet distance.  When $d$ and $k$ are $O(1)$, our results imply polynomial-time curve simplification algorithms and data structures for range searching, nearest neighbor search, and distance determination that have polynomial space complexities and logarithmic query times.   The connection to algebraic geometry may offer new perspectives in designing new algorithms and proving approximation results.  Our approach is general enough that it should be possible to to handle other variants of the problems considered.  

\cancel{

	We first rephrase the nearest neighbor problem
	under the Fr\'echet distance in the context of the polynomials. Let $\hat{P}_{\tau_a}$ be the polynomial set for the Fr\'echet distance constructed with respect to a curve $\tau_a\in T$, an unknown curve $\sigma$ consists of $kd$ variables and a variable $r$. A cell in $\mathscr{A}(\hat{P}_{\tau_a})$ is \emph{feasible} if the sign condition corresponding to this cell belongs to the set $\+S$ in Corollary~\ref{cor: distance_decision_F}. That is, any point $(\sigma, r)$ in this cell satisfies that $d_F(\tau, \sigma)\le r$. Take the union $\hat{P}_T=\bigcup_{a\in[n]}\hat{P}_{\tau_a}$. A cell in $\mathscr{A}(\hat{P}_T)$ is feasible if there exists $a\in [n]$ such that the cell is contained in a feasible cell in $\mathscr{A}(\hat{P}_{\tau_a})$. For any point $(\sigma, r)$ in a feasible cell in $\mathscr{A}(\hat{P}_T)$, $B_F(\sigma, r)\cap T$ is non-empty. For a point $(\sigma, r)$, we call it feasible if it locates in a feasible cell in $\mathscr{A}(\hat{P}_T)$. Let $r^{*}$ be $\min_{\tau_a\in T}d_F(\tau_a, \sigma)$. It is clear that $r^{*}$ is the value such that for all $r<r^{*}$, $(\sigma, r)$ is not feasible, and for all $r>r^{*}$, $(\sigma, r)$ is feasible. 
	
	Let $\+C$ contain all feasible cells in $\mathscr{A}(\hat{P}_T)$. We define the \emph{lower boundary} of the union of cells in $\+C$ as follows. A cell in $\+C$ is at the lower boundary if a ray shooting from any point in the cell in $-r$ direction does not intersect the other cells in $\+C$. In the description later, we call the ray in $-r$ direction a vertical ray. We present that if a cell in $\+C$ is not at the lower boundary, then a vertical ray shooting from any point in the cell must intersect the other cell(s) in $\+C$.
	
	\begin{lemma}\label{lem: lower_boundary}
		Take a cell in $\+C$, if it is not at the lower boundary of $\+C$, a vertical ray shooting from any point in the cell must intersect another cell in $\+C$.
	\end{lemma} 
	
	\begin{proof}
		Recall that a cell in $\+C$ is at the lower boundary if a vertical ray shooting from any point in it does not intersect the other cells in $\+C$. Take a cell $C$ in $\+C$. If $C$ is not at the lower boundary, there must be some point in $C$ so that a vertical ray shooting from the point intersects the other cell in $\+C$.
		
		Now assume that there is some point in $C$ such that a vertical ray shooting from the point does not intersect any cell in $\+C$. We are able to partition $C$ into two portions. That is, for the points in one portion, the vertical ray shooting from the point in it does not intersect the other cells in $\+C$; for points in the other portion, the vertical ray shooting from the point in it intersects the other cell in $\+C$. Let $p$ be the point between the two portions. It means that the neighborhood of $p$ contains points from both portions. Given that $C$ is a hypersurface, $p$ must be in the interior of $C$. It means that the vertical ray shooting from $p$ must immediately intersect the \textbf{interior} of another cell $C'$ in $\mathscr{A}(\hat{P}_T)$.
		
		If $C'$ is also a cell in $\+C$, as the vertical ray shooting from $p$ intersects the interior of $C'$, a small neighborhood of $p$ must be ``above'' $C'$ in a sense that a vertical ray shooting from any point in the neighborhood of $p$ must intersect $C'$ as well. It contradicts that the neighborhood contains points from two portions.
		
		If $C'$ is not a cell in $\+C$, we claim that for any point in the neighborhood, a vertical ray shooting from it does not intersect any other cell in $\+C$. Assume not, there is a point $q$ in the neighborhood of $p$ such that there is a point $q'$ on the vertical ray shooting from $q$ such that $q'$ belongs to a cell in $\+C$, and $qq'$ intersects $C'$. Given that $(\sigma, r)$ being feasible is equivalent to $d_F(\tau, \sigma)\le r$, it is clear that $(\sigma, r)$ is feasible for all $r>r^{*}$. It means that all points on the segment $qq'$ are feasible. Given that $qq'$ intersects $C'$, it means that $C'$ must be feasible. We get a contradiction.
	\end{proof}
	
	We present that $(\sigma, r^{*})$ must locate inside a cell at the lower boundary of $\+C$.
	
	\begin{lemma}\label{lem: pt_at_lower_boundary}
		Let $\tau$ be a curve in $\mathbb{X}_k^d$. Let $r^{*}=\min_{\tau_a\in T}d_F(\tau_a, \sigma)$. The point $(\sigma, r^{*})$ must be inside a cell at the lower boundary of $\+C$. Moreover, there is not another value $r$ so that $(\sigma, r)$ is inside another cell at the lower boundary.
	\end{lemma}
	
	\begin{proof}
		Assume that $(\sigma, r^{*})$ locates in a cell $C$ that is not at the lower boundary of $\+C$. Since $(\sigma, r^{*})$ is feasible, $C$ must belong to $\+C$. According to Lemma~\ref{lem: lower_boundary}, a vertical ray shooting from $(\sigma, r^{*})$ must intersect another cell in $\+C$. It means that there is another value $r'<r^{*}$ such that $(\sigma, r')$ is feasible, which is a contradiction.
		
		Now assume that there is another value $r$ so that $(\sigma, r)$ is inside another cell at the lower boundary. According to the definition of $r^{*}$, $r>r^{*}$. It means that a vertical ray shooting from $(\sigma, r)$ must intersects the cell where $(\sigma, r^{*})$, which violates the definition of the lower boundary.
	\end{proof}
	
	Next, we present how to compute the lower boundary of $\+C$. Given a polynomial set $\hat{P}_T$, via Theorem~\ref{thm: pt_for_cell}, we first get the sign conditions with non-empty realizations together with the point(s) inside each realization explicitly in $(nmk)^{O(kd)}$ time. Meanwhile, we can construct the set $\+C$ by checking whether each cell is feasible according to its corresponding sign condition. Each cell in $\+C$ is specified by a sign condition $\hat{P}_T$. So far, we have $\+C$ and the point(s) that belong(s) to each cell in $\+C$. To determine which cell in $\+C$ is at the lower boundary, we carry out the following test to every cell in $\+C$ according to Lemma~\ref{lem: lower_boundary}. Take a cell $C\in \+C$, let $p$ be the point in $C$ that we computed. We check whether the vertical ray shooting from $p$ intersects with the other cell in $\+C$.  If the ray does not intersect the other cells in $\+C$, $C$ is at the lower boundary; otherwise, $C$ is not.
	
	Let $p$ be the point in $C$. For $C'\in \+C\backslash\{C\}$, as the point $p$ fixes all coordinates except $r$, we can derive a polynomial inequality system in $r$ based on the sign condition corresponding to $C'$. We can calculate the intersection between the vertical ray and $C'$ by solving the polynomial inequality. Since all polynomials in $\hat{P}_T$ are quadratic or biquadratic in $r$ and there are $O(nmk(k+m))$ polynomials in $\hat{P}_T$. The inequality system can be solved in $O(nmk(k+m))$ time. We repeat the procedure on all cells in $\+C\backslash\{C\}$, if no cell in $\+C\backslash\{C\}$ intersects the ray, we assert that $C$ is at the lower boundary. Given that the size of $\+C$ is $(nmk)^{O(kd)}$ due to Theorem~\ref{thm: cell_num_bound}, it takes $(nmk)^{O(kd)}$ to determine whether $C$ is at the lower boundary. We can compute the lower boundary of $\+C$ in $(nmk)^{O(kd)}$ time. Note that each cell at the lower boundary is also specified by a sign condition of $\hat{P}_T$.
	
	According to Lemma~\ref{lem: pt_at_lower_boundary}, for any curve $\sigma\in \mathbb{X}_k^d$, there must be one and only one cell at the lower boundary of $\+C$ that contains point(s) with the first $kd$ coordinates equaling to $\sigma$. The underlying idea of the data structure for the nearest neighbor query is to identify the corresponding cell at the lower boundary for a query curve $\sigma$ efficiently. Since all points in a cell share the same sign condition and we can compute $B_F(\sigma, r^{*})\cap T$ from the sign condition $\text{sign}(\hat{P}_T((\sigma,r^{*})))$, we can attach a solution set to every cell at the lower boundary at the preprocessing time.
	
	We reduce the problem for identifying the corresponding cell at the lower boundary of $\+C$ to a point location problem in one dimension lower. Recall that we can utilize the quantifier elimination technique to project every cell at the lower boundary to a set of cells in one dimension lower. Specifically, each cell at the lower boundary is specified by a sign condition of $\hat{P}_T$. Given that $\hat{P}_T\subset\mathbb{R}[\sigma, r]$, the cell can be expressed as the following first order formula $$\Phi(\sigma)=(\exists r)(\text{sign}(\hat{P}_T(\sigma, r))=S),$$
	where $S$ is the sign condition corresponding to the cell. By quantifier elimination in Theorem~\ref{thm: quantifier_elimination}, we can get a new formula $\Psi(\sigma)$ in $(nmk)^{O(kd)}$ time. The number of polynomials in $\Psi(\sigma)$ is $(nmk)^{O(kd)}$, and all these new polynomials have $O(1)$ degree. Let $\+P_C$ contain all these polynomials in $\Psi(\sigma)$. Given a curve $\sigma$, we can know whether there is a value $r^{*}$ so that $(\sigma,r^{*})$ belongs to the cell at the lower boundary based on the signs of polynomials in $\+P_C$ realized by $\sigma$. 
	
	To deal with all cells at the lower boundary simultaneously, we take the union of the polynomial set generated by every cell at the lower boundary to get $\+P_{L}$. The total number of polynomials in $\+P_{L}$ is $(nmk)^{O(kd)}$. For every cell in the arrangement $\mathscr{A}(\+P_{L})$, we can correspond it to a cell at the lower boundary of $\+C$. Given a curve $\sigma$, once we can decide which cell in $\mathscr{A}(\+P_{L})$ it locates inside. We can identify the corresponding cell at the lower boundary for $\sigma$. We use linearization for doing point location in the arrangement $\mathscr{A}(\+P_{L})$ in the same way as section~\ref{sec: range_searching}. Given that all polynomials in $\+P_{L}$ have degree $O(1)$, we need to introduce $(kd)^{O(1)}$ extra variables. 
	
	\paragraph{Data structure for nearest neighbor.} We first construct a polynomial set $\hat{P}_T$. Let $\+C$ contain all feasible cells in $\mathscr{A}(\hat{P}_T)$. We compute $\+C$ in $(nmk)^{O(kd)}$ time via Theorem~\ref{thm: pt_for_cell}. Then compute the lower boundary of $\+C$ in $(nmk)^{O(kd)}$ time. For each cell at the lower boundary, let $p=(\sigma, r)$ be the point inside it. We attach the solution set $B_F(\sigma, r)\cap T$ to this cell, the set can be computed in $O(nmk\log(mk))$ time. For every cell $\+C$ at the lower boundary, we construct a polynomial set $\+P_C$ by quantifier elimination. Let $\+P_{L}$ be the union of polynomials generated by all cells at the lower boundary. We linearize polynomials in $\+P_{L}$ by lifting them into $\mathbb{R}^{(kd)^{O(1)}}$. Via Theorem~\ref{thm: pt_location}, we build a point location data structure for the arrangement $\mathscr{A}(\+P_{L})$ after linearization. For each cell in $\mathscr{A}(\+P_{L})$ after linearization, we identify its corresponding cell $C$ at the lower boundary of $\+C$ and attach the solution set of $C$ to it.
	
	\paragraph{Query procedure for nearest neighbor.} When a query curve $\sigma\in \mathbb{X}_k^d$ comes, we lift $\sigma$ to $\mathbb{R}^{(kd)^{O(1)}}$, i.e., the higher dimensional space where polynomials in $\+P_{L}$ are linearized. Locate this point in the arrangement $\mathscr{A}(\+P_L)$ after linearization and return the solution set attached to the cell.
	
	\begin{theorem}\label{thm: nearest_neighbor_F}
		Given a set $T\subset\mathbb{X}_m^d$ of size $n$, there is a data structure of space $(nmk)^{(kd)^{O(1)}}$ that answers the nearest neighbor query with a query curve $\sigma\in \mathbb{X}_k^d$ under the Fr\'echet distance in $O((kd)^{O(1)}\log(nmk))$ time.
	\end{theorem}
	
	\begin{proof}
		As shown above, we reduce the nearest neighbor problem to a point location in the arrangement $\mathscr{A}(\+P_L)$. The correctness of the reduction is clear in the description. The size of $\+P_L$ is $(nmk)^{O(kd)}$, and the polynomials become hyperplanes in $\mathbb{R}^{(kd)^{O(1)}}$ after linearization. The space and the query time is due to Theorem~\ref{thm: pt_location}.
	\end{proof}
	
	We are also able to determine the Fr\'echet distance between $\sigma$ and its nearest neighbor without explicit Fr\'echet distance computation. We present that in the corresponding sign condition of some cell at the lower boundary of $\+C$, the sign of some polynomial in $\hat{P}_T$ that involves $r$\footnote{As in section~\ref{sec: polynomial}, some polynomials like $f_{3,2}^{i,j}$ only have the coordinate of $\sigma$'s vertex as variables.} like $f_1$ and $f_{5,2}^{i,j,j'}$ must be 0.
	
	\begin{lemma}\label{lem: sign_zero}
		Take a cell $C$ at the lower boundary of $\+C$. Let $S$ be the corresponding sign condition of $C$. There must be some polynomial in $\hat{P}_T$ that involves $r$ such that the corresponding sign of the polynomial in the sign condition is 0.  
	\end{lemma} 
	
	\begin{proof}
		Given $C$ is at the lower boundary of $\+C$, for a point $(\sigma, r)$ in $C$, there must be a value $r^{*}$ such that $(\sigma, r^{*})$ belongs to $C$ as well, and for all $r'> r^{*}$, $(\sigma, r')$ is feasible, for all $r'<r^{*}$, $(\sigma, r')$ is not feasible.
		
		We prove by contradiction. Suppose that no polynomial involving $r$ has the sign realized by $(\sigma, r^{*})$ to be 0. Since all polynomials in $\hat{P}_T$ is continuous, there is a value $\varepsilon>0$ that is small enough so that the sign condition realized by $(\sigma, r^{*}-\varepsilon)$ is the same as $(\sigma, r^{*}-\varepsilon)$. It means that $(\sigma, r^{*}-\varepsilon)$ is also feasible, which is a contradiction.
	\end{proof}
	
	Lemma~\ref{lem: sign_zero} ensures that we can always attach a polynomial equation involving both $r$ and the coordinates of $\sigma$'s vertices to every cell at the lower boundary of $\+C$. The length of the polynomial equation is $O(d^4)$ in the worst case. Since all polynomials in $\hat{P}_T$ is either quadratic or biquadratic in $r$, when provided with $\sigma$, we can calculate the exact value for $r$ in $O(1)$ time. We get the following corollary.
	
	\begin{corollary}
		Given a set $T\subset\mathbb{X}_m^d$ of size $n$, there is a data structure of space $(nmk)^{(kd)^{O(1)}}$ that answers $\min_{\tau_a\in T}d_F(\tau_a, \sigma)$ for a query curve $\sigma\in \mathbb{X}_k^d$ in $O((kd)^{O(1)}\log(nmk))$ time.
	\end{corollary}
	
	
	Given that distance oracle under the Fr\'echet distance can be regarded as a special case where contains only one curve $\tau$, we get the following corollary.
	\begin{corollary}\label{cor:distance_oracle_F}
		Given an input curve $\tau$ in $\mathbb{X}_m^d$, there is a data structure of space $(mk)^{(kd)^{O(1)}}$ that answers $d_F(\tau, \sigma)$ for a query with a curve $\sigma\in \mathbb{X}_k^d$ in $O((kd)^{O(1)}\log(mk))$ time.
	\end{corollary}
	
	For the weak Fr\'echet distance, we can use the same idea to deal with  nearest neighbor and distance oracle. The only difference is that we use the polynomial set $\bigcup_{a\in [n]}\tilde{P}_{\tau_a}$.
	
	\begin{theorem}\label{thm: nearest neighbor_wF}
		Given a set $T\subset\mathbb{X}_m^d$ of size $n$, there is a data structure of space $(nmk)^{(kd)^{O(1)}}$ that answers the nearest neighbor query with a query curve $\sigma\in \mathbb{X}_k^d$ under the weak Fr\'echet distance in $O((kd)^{O(1)}\log(nmk))$ time.
	\end{theorem}
	
	\begin{proof}
		The analysis is the same as the proof of Theorem~\ref{thm: nearest_neighbor_F}.
	\end{proof}
	
	\begin{corollary}
		Given a set $T\subset\mathbb{X}_m^d$ of size $n$, there is a data structure of space $(nmk)^{(kd)^{O(1)}}$ that answers $\min_{\tau_a\in T}d_{wF}(\tau_a, \sigma)$ for a query curve $\sigma\in \mathbb{X}_k^d$ in $O((kd)^{O(1)}\log(nmk))$ time.
	\end{corollary}
	
	\begin{corollary}\label{cor:distance_oracle_wF}
		Given an input curve $\tau$ in $\mathbb{X}_m^d$, there is a data structure of space $(mk)^{(kd)^{O(1)}}$ that answers $d_{wF}(\tau, \sigma)$ for a query with a curve $\sigma\in \mathbb{X}_k^d$ in $O((kd)^{O(1)}\log(mk))$ time.
	\end{corollary}

}


\appendix

\bibliography{ref.bib}
\bibliographystyle{plain}
\end{document}